\def\x{\mathbf x}
\def\r{\mathbf r}
\def\y{\mathbf y}
\def\u{\mathbf u}
\def\hx{{\hat{\mathbf x}}}
\def\hr{{\hat{\mathbf r}}}
\def\hy{{\hat{\mathbf y}}}
\def\hz{{\hat{\mathbf z}}}
\def\hp{\hat{p}}
\def\tx{{\tilde{\mathbf x}}}
\def\tmu{\tilde{\mu}}
\def\bbE{\mathbb E}
\def\btheta{{\boldsymbol \theta}}
\def\bphi{{\boldsymbol \phi}}
\def\bvarphi{{\boldsymbol \varphi}}
\def\bmu{{\boldsymbol \mu}}
\def\bsigma{{\boldsymbol \sigma}}
\DeclareMathOperator*{\sgn}{sgn}
\DeclareMathOperator*{\logistic}{logistic}
\newtheorem{proposition}{Proposition}
\newcolumntype{C}[1]{>{\centering}p{#1}}
\begin{document}
%
\title{Deep Lossy Plus Residual Coding for \\Lossless and Near-lossless Image Compression}
%
%
%
%

\author{Yuanchao~Bai,~\IEEEmembership{Member,~IEEE,}
        Xianming~Liu,~\IEEEmembership{Member,~IEEE,}
        Kai~Wang,
        Xiangyang~Ji,~\IEEEmembership{Member,~IEEE,}
        Xiaolin~Wu,~\IEEEmembership{Fellow,~IEEE,}
        Wen~Gao,~\IEEEmembership{Fellow,~IEEE} 
\IEEEcompsocitemizethanks{
\IEEEcompsocthanksitem Yuanchao Bai, Xianming Liu and Kai Wang are with the School of Computer Science and Technology, Harbin Institute of Technology, Harbin, 150001, China, E-mail: \{yuanchao.bai, csxm\}@hit.edu.cn, cswangkai@stu.hit.edu.cn.
\IEEEcompsocthanksitem Xiangyang Ji is with the Department of Automation, Tsinghua University, Beijing, 100084, China, E-mail: xyji@tsinghua.edu.cn
\IEEEcompsocthanksitem Xiaolin Wu is with the Department of Electrical and Computer Engineering, McMaster University, Hamilton, L8G 4K1, Ontario, Canada, Email:xwu@ece.mcmaster.ca.
\IEEEcompsocthanksitem Wen Gao is with the School of Electronics Engineering and Computer Science, Peking University, Beijing, 100871, China, E-mail: wgao@pku.edu.cn. \protect \\
}
\thanks{(Corresponding author: Xianming Liu)}
}

\IEEEtitleabstractindextext{%
\begin{abstract}
    Lossless and near-lossless image compression is of paramount importance to professional users in many technical fields, such as medicine, remote sensing, precision engineering and scientific research.  But despite rapidly growing research interests in learning-based image compression, no published method offers both lossless and near-lossless modes. In this paper, we propose a unified and powerful deep lossy plus residual (DLPR) coding framework for both lossless and near-lossless image compression. In the lossless mode, the DLPR coding system first performs lossy compression and then lossless coding of residuals. We solve the joint lossy and residual compression problem in the approach of VAEs, and add autoregressive context modeling of the residuals to enhance lossless compression performance. In the near-lossless mode, we quantize the original residuals to satisfy a given $\ell_\infty$ error bound, and propose a scalable near-lossless compression scheme that works for variable $\ell_\infty$ bounds instead of training multiple networks. To expedite the DLPR coding, we increase the degree of algorithm parallelization by a novel design of coding context, and accelerate the entropy coding with adaptive residual interval. Experimental results demonstrate that the DLPR coding system  achieves both the state-of-the-art lossless and near-lossless image compression performance with competitive coding speed.
\end{abstract}

\begin{IEEEkeywords}
Deep Learning, Image Compression, Lossless Compression, Near-lossless Compression, Lossy Plus Residual Coding.
\end{IEEEkeywords}}

\maketitle

\IEEEdisplaynontitleabstractindextext

%
\IEEEpeerreviewmaketitle

\IEEEraisesectionheading{\section{Introduction}\label{sec:introduction}}
%
%
%
%

\IEEEPARstart{I}{n} many important technical fields, such as medicine, remote sensing, precision engineering and scientific research, imaging in high spatial, spectral and temporal resolutions is instrumental to discoveries and innovations. As achievable resolutions of modern imaging technologies steadily increase, users are inundated by the resulting astronomical amount of image and video data. For example, pathology imaging scanners can easily produce 1GB or more data per specimen. For the sake of cost-effectiveness and system operability (\emph{e.g.}, real-time access via clouds to high-fidelity visual objects), acquired raw images of high resolutions in multiple dimensions have to be compressed.

Unlike in consumer applications (\emph{e.g.}, smartphones and social media), where users are mostly interested in the appearlingness of decompressed images and can be quite oblivious to compression distortions at the signal level, high fidelity of decompressed images is of paramount importance to professional users in many technical fields. In the latter case, the current gold standard is mathematically lossless image compression.
The Shannon's source coding theorem \cite{shannon1948mathematical} establishes the theoretical foundation of lossless image compression, which proves the lower bound of the expected codelength given real probability distribution of image data, \emph{i.e.}, the information entropy. In practice, the compression performance of any specific lossless image codec depends on how well it can approximate the unknown real probability distribution, in order to approach the theoretical lower bound.
Despite years of research, typical compression ratios of traditional lossless image codecs \cite{weinberger2000loco,calic,sneyers2016flif,alakuijala2019jpeg} are limited between 2:1 and 3:1.
An alternative way to improve the compression performance while keeping the high fidelity of decompressed images is near-lossless image compression \cite{nll1998JEI,ke1998near}. Instead of mathematically lossless, near-lossless image compression imposes strict $\ell_\infty$ constraints on the decompressed images requiring the maximum reconstruction error of each pixel to be no larger than a given tight numerical bound.
By introducing the $\ell_\infty$ constrained error bound, near-lossless image compression can guarantee the reliability of each pixel while break the theoretical compression limit of lossless image compression. When the tight error bound is set to zero, near-lossless image compression is equivalent to lossless image compression. Traditional lossless image codecs, such as JPEG-LS \cite{weinberger2000loco} and CALIC \cite{calic,Wu2000nll}, provide users with both lossless and near-lossless image compression in order to meet the requirements of bandwidth and cost-effectiveness for diverse imaging and vision systems.

With the fast progress of deep neural networks (DNNs), learning-based image compression has achieved tremendous progress over the last five years. However, most of these methods are designed for rate-distortion optimized lossy image compression \cite{hu2022pami}, which cannot realize lossless or near-lossless image compression even with sufficient bit-rates. Recently, a number of research teams embark on developing end-to-end optimized lossless image compression methods \cite{Mentzer2019cvpr,mentzer2020cvpr,kingma2019bitswap,townsend2020hilloc,max2019nips,ho2019compression,zhang2021ivpf,zhang2021iflow}.
These methods take advantage of sophisticated deep generative models, such as autoregressive models \cite{pixelrnn2016icml,pixelcnn,pixelcnn_pp}, flow models \cite{kobyzev2020normalizing} and variational auto-encoder (VAE) models \cite{kingma2013auto}, to learn the unknown probability distribution of given image data and entropy encode the image data to bitstreams according to the learned models. While superior compression performance is achieved beyond traditional lossless image codecs, existing learning-based lossless image methods usually suffer from excessively slow coding speed and can hardly be applied to practical full resolution image compression tasks.
It is also regrettable that, unlike traditional JPEG-LS \cite{weinberger2000loco} and CALIC \cite{calic,Wu2000nll}, no studies (except our recent work \cite{Bai_2021_CVPR}) are carried out on learning-based near-lossless image compression, given its great potential as aforementioned.

In this paper, we propose a unified and powerful deep lossy plus residual (DLPR) coding framework for both lossless and near-lossless image compression, which addresses the challenges of learning-based lossless image compression to a large extent. The remarkable characters of the DLPR coding system includes: \textit{the state-of-the-art lossless and near-lossless image compression performance, scalable near-lossless image compression with variable $\ell_\infty$ bounds in a single network and competitive coding speed on even 2K resolution images.}
Specifically, for lossless image compression, the DLPR coding system first performs lossy compression and then lossless coding of residuals.
Both the lossy image compressor and the residual compressor are designed with advanced neural network architectures. We solve the joint lossy image and residual compression problem in the approach of VAEs, and add autoregressive context modeling of the residuals to enhance lossless compression performance.
Note that our VAE model is different from transform coding based VAE for simply lossy image compression \cite{Balle2017iclr,Balle2018variational} or bits-back coding based VAEs \cite{iclr2019bitback} for lossless image compression.
For near-lossless image compression, we quantize the original residuals to satisfy a given $\ell_\infty$ error bound, and compress the quantized residuals instead of the original residuals.
To achieve scalable near-lossless compression with variable $\ell_\infty$ error bounds, we derive the probability model of the quantized residuals by quantizing the learned probability model of the original residuals for lossless compression, without training multiple networks.
Because residual quantization leads to the context mismatch between training and inference, we propose a scalable quantized residual compressor with bias correction scheme to correct the bias of the derived probability model.
In order to expedite the DLPR coding, the bottleneck is the serialized autoregressive context model in residual and quantized residual compression. We thus propose a novel design of coding context to increase the degree of algorithm parallelization, and further accelerate the entropy coding with adaptive residual interval.
Finally, the lossless or near-lossless compressed image is stored including the bitstreams of the encoded lossy image, the encoded original residuals or the quantized residuals.

In summary, the major contributions of this research are as follows:
\begin{itemize}
    \item We propose a unified DLPR coding framework to realize both lossless and near-lossless image compression. The framework can be interpreted as a VAE model and end-to-end optimized. Though lossless and near-lossless modes have been supported in traditional lossless image codecs, such as JPEG-LS or CALIC, we are the first to support both the two modes in learning-based image compression.
    \item We realize scalable near-lossless image compression with variable $\ell_\infty$ error bounds. Given the $\ell_\infty$ bounds, we quantize the original residuals and derive the probability model of the quantized residuals from the learned probability model of the original residuals for lossless image compression, instead of training multiple networks. A bias correction scheme further improves the compression performance.
    \item  To expedite the DLPR coding system, we propose a novel design of coding context to increase the degree of algorithm parallelization without compromising the compression performance. Meanwhile, we further introduce an adaptive residual interval scheme to reduce the entropy coding time.
\end{itemize}
Experimental results demonstrate that the DLPR coding system achieves both the state-of-the-art lossless and near-lossless image compression performance, and achieves competitive PSNR while much smaller $\ell_\infty$ error compared with lossy image codecs at high bit rates. At the same time, the DLPR coding system is practical in terms of runtime, which can compress and decompress 2K resolution images in several seconds.

Note that this paper is the non-trivial extension of our recent work \cite{Bai_2021_CVPR}. First, this paper focuses on both lossless and near-lossless image compression, rather than simply near-lossless image compression in \cite{Bai_2021_CVPR}. Second, we improve the network architectures of lossy image compressor, residual compressor and scalable quantized residual compressor beyond \cite{Bai_2021_CVPR}, leading to more powerful while concise DLPR coding system. Third, to expedite the DLPR coding system, we introduce a novel design of context coding to increase the degree of algorithm parallelization and an adaptive residual interval scheme to accelerate the entropy coding. Finally, we conduct comprehensive experiments to demonstrate that the resulting DLPR coding system achieves the state-of-the-art lossless and near-lossless image compression performance, significantly outperforms its prototype in \cite{Bai_2021_CVPR}, while enjoys much faster coding speed.

The rest of the paper is organized as follows. We provide a brief review of related works in Sec.\;\ref{sec:related_work}. We theoretically analyze lossless and near-lossless image compression problems, and formulate the DLPR coding framework in Sec.\;\ref{sec:dlprc}. The network architecture and acceleration of DLPR coding framework are presented in Sec.\;\ref{sec:network_architecture}. Experiments and conclusions are in Sec.\;\ref{sec:experiments} and Sec.\;\ref{sec:conclusion}, respectively.

\section{Related Work}
\label{sec:related_work}
This section reviews related works from three aspects, including learning-based lossy image compression, learning-based lossless image compression and near-lossless image compression. Our DLPR coding framework takes advantages of the recent progress of learning-based lossy image compression, and achieves the state-of-the-art performance of lossless and near-lossless image compression.

\subsection{Learning-based Lossy Image Compression}
Early learning-based lossy image compression methods with DNNs are based on recurrent neural networks (RNNs), starting from the work of Toderici \emph{et~al.} \cite{Toderici2016iclr}. Toderici \emph{et~al.} \cite{Toderici2016iclr} proposed a long short-term memory (LSTM) network to progressively encode images or residuals, and achieved multi-rate image compression with the increase of RNN iterations. Following \cite{Toderici2016iclr}, Toderici \emph{et~al.} \cite{toderici2017full} and Johnston \emph{et al.} \cite{johnston2018improved} improved RNN-based lossy image compression by modifying the RNN architectures, introducing LSTM-based entropy coding and adding spatially adaptive bit allocation.

Apart from RNN-based methods, a general end-to-end convolutional neural network (CNN) based compression framework was proposed by Ball\'{e} \emph{et~al.} \cite{Balle2017iclr} and Theis \emph{et~al.} \cite{theis2017iclr}, which can be interpreted as VAEs \cite{kingma2013auto} based on transform coding \cite{goyal2001theoretical}.
In this framework, raw images are transformed to latent space, quantized and entropy encoded to bitstreams at encoder side. At decoder side, the quantized latent variables are recovered from the bitstreams and then inversely transformed to reconstruct lossy images. During training, the compression rates are approximated and minimized with the entropy of the quantized latent variables, while the reconstruction distortion is usually minimized with PSNR or MS-SSIM \cite{wang2003msssim}, leading to the rate-distortion optimization \cite{shannon1948mathematical}.
This framework is followed by most recent learned lossy image compression methods and is improved from three aspects, \emph{i.e.}, transform (network architectures) \cite{cheng2020cvpr,Ma2022pami,zhu2022iclr}, quantization \cite{li2020pami,Mentzer2018cvpr,agustsson2017soft} and entropy coding \cite{Balle2018variational,hu2020coarse,minnen2018nips,qian2021iclr,qian2022iclr,Li2020efficient,minnen2020channel,he2021cvpr}.

Inspired by the recent progress of learned lossy image compression, we propose a DLPR coding framework for both lossless and near-lossless image compression, by integrating lossy image compression with residual compression. The DLPR coding system achieves the state-of-the-art lossless and near-lossless image compression performance with competitive coding speed.

\subsection{Learning-based Lossless Image Compression}
Lossless image compression can usually be solved in two steps: 1) statistical modeling of given image data; 2) encoding the image data to bitstreams according to the statistical model, with entropy encoding tools such as arithmetic coder \cite{arithmetic_coding} or asymmetric numerical systems \cite{duda2009asymmetric}.
Given the strong connections between lossless compression and unsupervised machine learning, deep generative models are introduced to solve the first step of lossless image compression, which is a challenging task due to the complexity of unknown probability distribution of raw images.
There are three dominant kinds of deep generative models used in lossless image compression, including autoregressive models, flow models and VAE models.

\textit{Autoregressive models.} Oord \emph{et~al.} \cite{pixelrnn2016icml,pixelcnn} proposed PixelRNN and PixelCNN that estimated the joint distribution of pixels in an image as the product of the conditional distributions over the pixels. The masked convolution was used to ensure that the conditional probability estimation of the current pixel only depends on the previously observed pixels. Following \cite{pixelrnn2016icml,pixelcnn}, Salimans \emph{et~al.} \cite{pixelcnn_pp} proposed PixelCNN++ that improved the implementation of PixelCNN with several aspects.
Reed \emph{et~al.} \cite{reed2017parallel} proposed multi-scale parallelized PixelCNN that allowed efficient probability density estimation.
Zhang \emph{et~al.} \cite{zhang2021nips} studied the out-of-distribution generalization of autoregressive models and utilized a local PixelCNN for lossless image compression.

\textit{Flow models.} Hoogeboom \emph{et~al.} \cite{max2019nips} proposed a discrete integer flow (IDF) model for lossless image compression that learned rich invertible transforms on discrete image data. The latent variables resulting from IDF were assumed to enjoy simpler distributions leading to efficient entropy coding, and were able to recover raw images losslessly. Following \cite{max2019nips}, Berg \emph{et~al.} \cite{berg2020idfpp} further proposed a IDF++ model improving several aspects of the IDF model, such as the network architecture. In \cite{Ma2022pami}, Ma \emph{et~al.} proposed a wavelet-like transform for lossy and lossless image compression, which can be considered as a special flow model. In \cite{ho2019compression}, Ho \emph{et~al.} proposed a local bit-back coding scheme and realized lossless image compression with continuous flows. In \cite{zhang2021ivpf}, Zhang \emph{et~al.} proposed a invertible volume-preserving flow (iVPF) model to achieve discrete bijection for lossless image compression. Beyond \cite{zhang2021ivpf}, Zhang \emph{et~al.} \cite{zhang2021iflow} further proposed a iFlow model composed of modular scale transforms and uniform base conversion systems, leading to the state-of-the-art performance.

\textit{VAE models.} Townsend \emph{et~al.} \cite{iclr2019bitback} proposed bits-back with asymmetric numeral systems (BB-ANS) that performed lossless image compression with VAE models. The bits-back coding scheme estimated posterior distributions of latent variables conditioned on given images and decoded the latent variables from auxiliary bits accordingly. Kingma \emph{et~al.} \cite{kingma2019bitswap} further generalized BB-ANS with a bit-swap scheme based on hierarchical VAE models, to avoid large amount of auxiliary bits. In \cite{townsend2020hilloc}, Townsend \emph{et~al.} proposed an alternative hierarchical latent lossless compression (HiLLoC) method integrating BB-ANS with hierarchical VAE models, and adopted FLIF \cite{sneyers2016flif} to compress parts of image data as auxiliary bits. Besides bits-back coding scheme, Mentzer \emph{et~al.} \cite{Mentzer2019cvpr} proposed a practical lossless image compression (L3C) model, which can also be translated as a hierarchical VAE model.

Instead of the above mentioned methods, we propose a DLPR coding framework, which can be utilized for lossless image compression and interpreted in terms of VAE models. In \cite{mentzer2020cvpr}, Mentzer \emph{et~al.} used the traditional BPG lossy image codec \cite{bpg} to compress raw images and proposed a CNN model to compress the corresponding residuals, which is a special case of our framework.
We further design our network architecture by integrating the VAE model with an autoregressive context model, leading to superior lossless image compression performance. Meanwhile, we further propose novel design of coding context to increase coding parallelization, making the DLPR coding system practical for real image compression tasks.

\subsection{Near-lossless Image Compression}
Near-lossless image compression requires the maximum reconstruction error of each pixel to be no larger than a given tight numerical bound, \emph{i.e.}, the $\ell_\infty$ error bound. It is a challenging task to realize near-lossless image compression, because the $\ell_\infty$ error bound is non-differentiable and must be strictly satisfied.

Traditional near-lossless image compression methods can be divided into three categories: 1) \textit{pre-quantization}: adjusting raw pixel values to the $\ell_\infty$ error bound, and then compressing the pre-processed images with lossless image compression, such as near-lossless WebP \cite{webp}; 2) \textit{predictive coding}: predicting subsequent pixels based on previously encoded pixels, then quantizing predication residuals to satisfy the $\ell_\infty$ error bound, and finally compressing the quantized residuals, such as, \cite{chen1994near,ke1998near}, near-lossless JPEG-LS \cite{weinberger2000loco} and near-lossless CALIC \cite{Wu2000nll}; 3) \textit{lossy plus residual coding}: similar to 2), but replacing predictive coder with lossy image coder, and both the lossy image and the quantized residuals are encoded, as discussed in \cite{nll1998JEI}.
Compared with learning-based lossy and lossless image compression, learning-based near-lossless image compression is still in its infancy.

In this paper, we propose a DLPR coding framework inspired by traditional lossy plus residual coding, which can be utilized for near-lossless image compression. The DLPR coding framework supports scalable near-lossless image compression with variable $\ell_\infty$ bounds without training multiple networks, and achieves the state-of-the-art compression performance. Recently, Zhang \emph{et~al.} \cite{zhang2020ultra} proposed a learning-based soft-decoding method to improve the reconstruction performance of near-lossless CALIC. Though PSNR is improved, the soft-decoding method cannot strictly guarantee the $\ell_\infty$ error bound.

\section{Deep Lossy Plus Residual Coding}
\label{sec:dlprc}
In this section, we introduce a DLPR coding framework for lossless and near-lossless image compression, by integrating lossy image compression with residual compression.
We theoretically analyze lossless and near-lossless image compression problems, and formulate the DLPR coding framework in terms of VAEs.

\subsection{DLPR coding for Lossless Image Compression}
Lossless image compression guarantees that raw image are perfectly reconstructed from the compressed bitstreams.
Assuming that raw images $\x$'s are sampled from an unknown probability distribution $p(\x)$, the shortest expected codelength of the compressed images with lossless image compression is theoretically lower-bounded by the information entropy \cite{shannon1948mathematical,info_theory}:
\begin{equation}
    H(p) = \bbE_{p(\x)} [-\log p(\x)]
    \label{eq:entropy}
\end{equation}
In practice, the compression performance of any specific lossless image compression method depends on how well it can approximate $p(\x)$ with an underlying model $p_\btheta(\x)$. The corresponding compression performance is given by the cross entropy \cite{shannon1948mathematical,info_theory}:
\begin{equation}
    H(p, p_\btheta) = \bbE_{p(\x)} [-\log p_\btheta(\x)] \ge H(p)
    \label{eq:cross_entropy}
\end{equation}
where \eqref{eq:cross_entropy} holds only if $p_\btheta(\x)=p(\x)$.

In order to approximate $p(\x)$, the latent variable model $p_\btheta(\x)$ is extensively employed for this purpose and is formulated by a marginal distribution:
\begin{equation}
    p_\btheta(\x)=\int p_\btheta(\x,\y)d\y=\int p_\btheta(\x|\y)p_\btheta(\y)d\y
    \label{eq:lvm}
\end{equation}
where $\y$ is an unobserved latent variable and $\btheta$ denote the parameters of this model.
Since directly learning the marginal distribution $p_\btheta(\x)$ with \eqref{eq:lvm} is typically intractable, one alternative way is to optimize the evidence lower bound (ELBO) via VAEs \cite{kingma2013auto}. By introducing an inference model $q_\bphi(\y|\x)$ to approximate the posterior $p_\btheta(\y|\x)$, the logarithm of the marginal likelihood $p_\btheta(\x)$ can be rewritten as:
\begin{equation}
    \log p_\btheta(\x)=
    \underbrace{\bbE_{q_\bphi(\y|\x)}\log\frac{p_\btheta(\x,\y)}{q_\bphi(\y|\x)}}_{ELBO}+\underbrace{\bbE_{q_\bphi(\y|\x)}\log\frac{q_\bphi(\y|\x)}{p_\btheta(\y|\x)}}_{D_{kl}(q_\bphi(\y|\x)||p_\btheta(\y|\x))}
    \label{eq:elbo}
\end{equation}
where $D_{kl}(\cdot||\cdot)$ is the Kullback-Leibler (KL) divergence. $\bphi$ denote the parameters of the inference model $q_\bphi(\y|\x)$. Because $D_{kl}(q_\bphi(\y|\x)||p_\btheta(\y|\x))\ge0$ and $\log p_\btheta(\x)\le0$, ELBO is the lower bound of $\log p_\btheta(\x)$. Thus, we have
\begin{equation}
    \bbE_{p(\x)}[-\log p_\btheta(\x)] \le \bbE_{p(\x)}\bbE_{q_\bphi(\y|\x)}\left[-\log\frac{p_\btheta(\x,\y)}{q_\bphi(\y|\x)}\right]
    \label{eq:neg_elbo}
\end{equation}
and can minimize the expectation of negative ELBO as a proxy for the expected codelength $\bbE_{p(\x)}[-\log p_\btheta(\x)]$.

In order to minimize the expectation of negative ELBO, we propose a DLPR coding framework.
We first adopt lossy image compression based on transform coding \cite{goyal2001theoretical} to compress the raw image $\x$ and obtain its lossy reconstruction $\tx$.
The expectation of negative ELBO can be reformulated as follows:
\begin{equation}
    \bbE_{p(\x)}\bbE_{q_\bphi(\hy|\x)}\left[\cancelto{0}{\log q_\bphi(\hy|\x)}\underbrace{-\log p_\btheta(\x|\hy)}_{Distortion}\underbrace{-\log p_\btheta(\hy)}_{R_\hy}\right]
    \label{eq:lossy_code}
\end{equation}
where $\hy$ is the quantized result of continuous latent representation $\y$ and $\y$ is deterministically transformed from $\x$.
Like \cite{Balle2018variational}, we relax the quantization of $\y$ by adding noise from $\mathcal{U}(-\frac12,\frac12)$, and assume $q_\bphi(\hy|\x)=\prod_{i} \mathcal{U}(y_{i}-\frac12,y_{i}+\frac12)$. Thus, $\log q_\bphi(\hy|\x)=0$ is dropped from \eqref{eq:lossy_code}. For simply lossy image compression, such as \cite{Balle2017iclr,Balle2018variational,minnen2018nips,cheng2020cvpr}, the second term of \eqref{eq:lossy_code} can be regarded as the distortion loss between $\x$ and its lossy reconstruction $\tx$ from $\hy$. The third term can be regarded as the rate loss of lossy image compression. Only $\hy$ needs to be encoded to the bitstreams and stored.

Beyond lossy image compression, we further take residual compression into consideration.
The residual $\r$ is computed by $\r=\x-\tx$. We have the following Proposition\;\ref{prop:prop_1}.
\begin{proposition}
    $p_\btheta(\x|\hy)=p_\btheta(\tx, \r|\hy)=p_\btheta(\r|\tx, \hy)$.
    \label{prop:prop_1}
\end{proposition}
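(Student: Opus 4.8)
The plan is to use two structural facts about the DLPR construction: (i) the lossy reconstruction $\tx$ is a \emph{deterministic} function of the quantized latent $\hy$ --- it is the output of the decoder (inverse transform) applied to $\hy$ --- and (ii) the residual is defined by the affine identity $\r=\x-\tx$. Write $\tx=g(\hy)$ for the decoder map. Conditioned on $\hy$, the correspondence $\x\mapsto(\tx,\r)=\bigl(g(\hy),\,\x-g(\hy)\bigr)$ is then a bijection, with inverse $(\tx,\r)\mapsto\tx+\r$.

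For the first equality, I would argue that $p_\btheta(\x|\hy)=p_\btheta(\tx,\r|\hy)$ is just a change of variables. Since image intensities are discrete, the map above is a relabeling of outcomes, so the conditional pmf is preserved: the mass at $\x$ equals the mass at $(\tx,\r)=(g(\hy),\x-g(\hy))$. (In the continuous relaxation one would note instead that the map is a translation by the constant $g(\hy)$, hence has unit Jacobian, so the density transforms trivially.)

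For the second equality, I would apply the chain rule $p_\btheta(\tx,\r|\hy)=p_\btheta(\r|\tx,\hy)\,p_\btheta(\tx|\hy)$ and observe that, because $\tx=g(\hy)$ is deterministic given $\hy$, the factor $p_\btheta(\tx|\hy)$ is degenerate --- equal to $1$ at $\tx=g(\hy)$ and $0$ otherwise --- so on its support it contributes nothing and $p_\btheta(\tx,\r|\hy)=p_\btheta(\r|\tx,\hy)$.

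The only point needing care is making the ``degenerate conditional'' step rigorous, since $p_\btheta(\tx|\hy)$ is a Dirac mass rather than an ordinary density and the chain-rule identity must be read as holding at the unique $\tx$ compatible with $\hy$. I would sidestep this by working throughout with the discrete pmfs of $\x$ and $\r$ --- natural here since pixels are integer-valued --- where ``deterministic given $\hy$'' literally means $p_\btheta(\tx|\hy)\in\{0,1\}$, and by always evaluating the joint at the unique consistent triple $(\x,\tx,\r)$ with $\tx=g(\hy)$, $\r=\x-\tx$. No further machinery is required; both equalities are consequences of the decoder's determinism together with $\r=\x-\tx$.
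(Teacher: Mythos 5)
Your proposal is correct and follows essentially the same route as the paper: both equalities rest on the decoder's determinism ($p_\btheta(\tx|\hy)\in\{0,1\}$) combined with the chain rule $p_\btheta(\tx,\r|\hy)=p_\btheta(\tx|\hy)\,p_\btheta(\r|\tx,\hy)$. The only cosmetic difference is that you justify the first equality as a bijective relabeling of outcomes given $\hy$, whereas the paper writes $p_\btheta(\x|\hy)=\sum_{\tx+\r=\x}p_\btheta(\tx,\r|\hy)$ and lets all but the single consistent term vanish --- the same fact expressed by marginalization rather than change of variables.
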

\begin{proof}
For each $\x$ and all $(\tx,\r)$ pairs satisfying $\tx+\r=\x$, we have $p_\btheta(\x|\hy)=\sum_{\tx+\r=\x} p_\btheta(\tx,\r|\hy)$.
Following Bayes' rule, we have $p_\btheta(\tx,\r|\hy)=p_\btheta(\tx|\hy)\cdot p_\btheta(\r|\tx,\hy)$.
Thus, we have $p_\btheta(\x|\hy)=\sum_{\tx+\r=\x} p_\btheta(\tx|\hy)\cdot p_\btheta(\r|\tx,\hy)$.
Because the lossy reconstruction $\tx$ is computed by the deterministic inverse transform of $\hy$, there is only one $\tx$ with $p_\btheta(\tx|\hy)=1$ and the other $\tx$'s are with $p_\btheta(\tx|\hy)=0$. Thus, we can have $p_\btheta(\x|\hy)=p_\btheta(\tx|\hy)\cdot p_\btheta(\r|\tx,\hy)+\sum 0 = p_\btheta(\tx, \r|\hy)= 1\cdot p_\btheta(\r|\tx,\hy)= p_\btheta(\r|\tx,\hy)$.
\end{proof}

Based on \eqref{eq:lossy_code} and Proposition\;\ref{prop:prop_1}, we substitute $p_\btheta(\r|\tx,\hy)$ for $p_\btheta(\x|\hy)$ and achieve the DLPR coding formulation:
\begin{equation}
    \bbE_{p(\x)}\bbE_{q_\bphi(\hy|\x)}\left[\underbrace{-\log p_\btheta(\r|\tx,\hy)}_{R_\r}\underbrace{-\log p_\btheta(\hy)}_{R_\hy}\right]
    \label{eq:lossy_residue_code}
\end{equation}
where the first term $R_\r$ and the second term $R_\hy$ of \eqref{eq:lossy_residue_code} are the expected codelengths of entropy encoding $\r$ and $\hy$ with $p_\btheta(\r|\tx,\hy)$ and $p_\btheta(\hy)$, respectively.
During training, we relax the quantization of $\tx$ by adding noise from $\mathcal{U}(-\frac12,\frac12)$, and have $\log p_\btheta(\tx|\hy)=0$ consistent with the precondition of the Proposition\;\ref{prop:prop_1}.
Because \eqref{eq:lossy_residue_code} is equivalent to the expectation of negative ELBO \eqref{eq:neg_elbo}, the proposed DLPR coding framework is the upper-bound of the expected codelength $\bbE_{p(\x)}[-\log p_\btheta(\x)]$ and can be minimized as a proxy.

Note that no distortion loss of lossy image compression is specified in \eqref{eq:lossy_residue_code}. Therefore, we can embed arbitrary lossy image compressors and minimize \eqref{eq:lossy_residue_code} to achieve lossless image compression. A special case is the previous lossless image compression method \cite{mentzer2020cvpr}, in which the BPG lossy image compressor \cite{bpg} with a learned quantization parameter classifier minimizes $-\log p_\btheta(\hy)$ and a CNN-based residual compressor minimizes $-\log p_\btheta(\r|\tx)$.

\subsection{DLPR coding for Near-lossless Image Compression}
We further extend the DLPR coding framework for near-lossless image compression.
Given a tight $\ell_\infty$ error bound $\tau\in\{1,2,\ldots\}$, near-lossless methods compress a raw image $\x$ satisfying the following distortion constraint:
\begin{equation}
    D_{nll}(\x,\hx)=\|\x-\hx\|_\infty=\max_{i,c} |x_{i,c}-\hat{x}_{i,c}|\le\tau
    \label{eq:l_infty}
\end{equation}
where $\hx$ is the near-lossless reconstruction of the raw image $\x$. $x_{i,c}$ and $\hat{x}_{i,c}$ are the pixels of $\x$ and $\hx$, respectively. $i$ denotes the $i$-th spatial position in a pre-defined scan order, and $c$ denotes the $c$-th channel. If $\tau=0$, near-lossless image compression is equivalent to lossless image compression.

In order to satisfy the $\ell_\infty$ constraint \eqref{eq:l_infty}, we extend the DLPR coding framework by quantizing the residuals.
First, we still obtain a lossy reconstruction $\tx$ of the raw image $\x$ through lossy image compression. Although lossy image compression methods can achieve high PSNR results at relatively low bit rates, it is difficult for these methods to ensure a tight error bound $\tau$ of each pixel in $\tx$.
We then compute the residual $\r=\x-\tx$ and suppose that $\r$ is quantized to $\hr$. Let $\hx=\tx+\hr$, the reconstruction error $\x-\hx$ is equivalent to the quantization error $\r-\hr$ of $\r$. Thus, we adopt a uniform residual quantizer whose bin size is $2\tau+1$ and quantized value is \cite{weinberger2000loco,Wu2000nll}:
\begin{equation}
    \hat{r}_{i,c}=\sgn(r_{i,c})\cdot(2\tau+1)\left\lfloor\frac{|r_{i,c}|+\tau}{2\tau+1}\right\rfloor
    \label{eq:r_quantization}
\end{equation}
where $\sgn(\cdot)$ denotes the sign function. $r_{i,c}$ and $\hat{r}_{i,c}$ are the elements of $\r$ and $\hr$, respectively.
With \eqref{eq:r_quantization}, we now have $|r_{i,c}-\hat{r}_{i,c}|\le\tau$ for each $\hat{r}_{i,c}$ in $\hr$, satisfying the tight error bound \eqref{eq:l_infty}. Because residual quantization \eqref{eq:r_quantization} is deterministic, the DLPR coding framework for near-lossless image compression can be formulated as:
\begin{equation}
    \bbE_{p(\x)}\bbE_{q_\bphi(\hy|\x)}\left[\underbrace{-\log p_\btheta(\hr|\tx,\hy)}_{R_\hr}\underbrace{-\log p_\btheta(\hy)}_{R_\hy}\right]
    \label{eq:lossy_residue_code_nll}
\end{equation}
where the first term $R_\hr$ of \eqref{eq:lossy_residue_code_nll} is the expected codelength of the quantized residual $\hr$ with $p_\btheta(\hr|\tx,\hy)$, rather than that of the original residual $\r$ in \eqref{eq:lossy_residue_code}.
Finally, we concatenate the bitstream of $\hr$ with that of $\hy$, leading to the near-lossless image compression result.

\section{Network Architecture and Acceleration}
\label{sec:network_architecture}
\subsection{Network Architecture of DLPR Coding}
\label{subsec:network_arch}
We propose the network architecture of our DLPR coding framework including a lossy image compressor (LIC), a residual compressor (RC) and a scalable quantized residual compressor (SQRC), as illustrated in Fig.\;\ref{fig:network_architecture}.
With LIC and RC, we realize DLPR coding for lossless image compression. With LIC and SQRC, we further realize DLPR coding for near-lossless image compression with variable $\ell_\infty$ bounds $\tau\in\{1,2,\ldots\}$ in a single network, instead of training multiple networks for different $\tau$'s.
We next specify each of the three components in the following subsections.

\begin{figure*}[!t]
\centering
\includegraphics[width=0.99\linewidth]{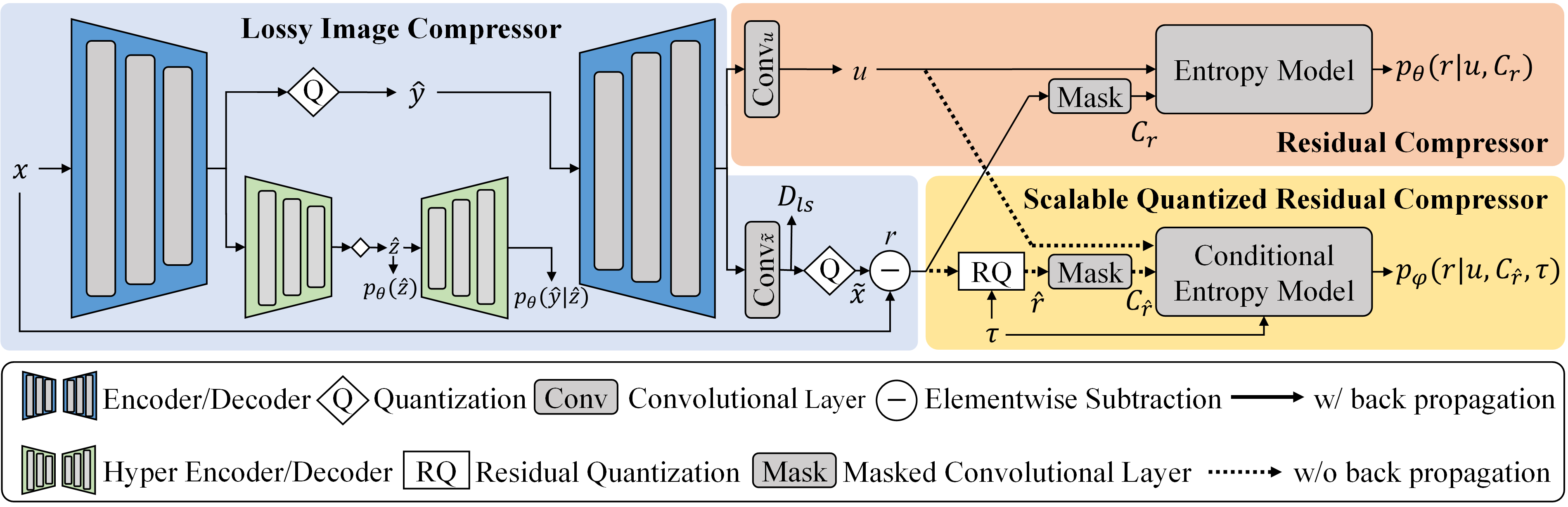}
\caption{Network architecture of DLPR coding framework, including a lossy image compressor (LIC), a residual compressor (RC) and a scalable quantized residual compressor (SQRC).}
\label{fig:network_architecture}
\end{figure*}

\subsubsection{Lossy Image Compressor}
In LIC, we employ sophisticated image encoder and decoder while efficient hyper-prior model \cite{Balle2018variational}, as shown in Fig.\;\ref{fig:lossy_compressor}.
The image encoder $g_e(\cdot)$ and decoder $g_d(\cdot)$ are composed of analysis, synthesis and Swin-Attention blocks following the philosophy of residual and self-attention learning \cite{he2016deep,yulun2019iclr,cheng2020cvpr}, as detailed in Fig.\;\ref{fig:detailed_blocks}.
In Swin-Attention blocks, we adopt the window and shifted window based multi-head self-attention (W-MSA/SW-MSA) in \cite{liu2021swin} to aggregate information within and across local windows adaptively, which improve the representation ability of $g_e(\cdot)$ and $g_d(\cdot)$ with moderate computational complexity.
With $g_e(\cdot)$ and $g_d(\cdot)$, we transform an input raw image $\x$ to its latent representation $\y=g_e(\x)$, quantize $\y$ to $\hy=Q(\y)$, and reversely transform $\hy$ to the lossy reconstruction $\tx=g_d(\hy)$.

Because the sophisticated image encoder and decoder can largely reduce the spatial redundancies in $\x$, the burden of entropy coding $\hy$ is relieved and we decide to employ the efficient hyper-prior model \cite{Balle2018variational} without any context models to ensure the coding parallelization on GPUs. The hyper-prior model extracts side information $\hz=Q(h_e(\y))$ to model the probability distribution of $\hy$, where $h_e(\cdot)$ is the hyper-encoder.
We assume a factorized Gaussian distribution model $\mathcal{N}(\bmu, \bsigma)=h_d(\hz)$ for $p_\theta(\hy|\hz)$ where $h_d(\cdot)$ is the hyper-decoder, and a non-parametric factorized density model for $p_\theta (\hz)$.
The $R_\hy$ in \eqref{eq:lossy_residue_code} and \eqref{eq:lossy_residue_code_nll} is thus extended by
\begin{equation}
    R_{\hy,\hz}=\bbE_{p(\x)}\bbE_{q_\bphi(\hy,\hz|\x)}\left[-\log p_\btheta(\hy|\hz)-\log p_\btheta (\hz)\right]
    \label{eq:lossy_hp}
\end{equation}
where $R_{\hy,\hz}$ is the cost of encoding both $\hy$ and $\hz$.

\begin{figure*}[!t]
\centering
\includegraphics[width=0.9\linewidth]{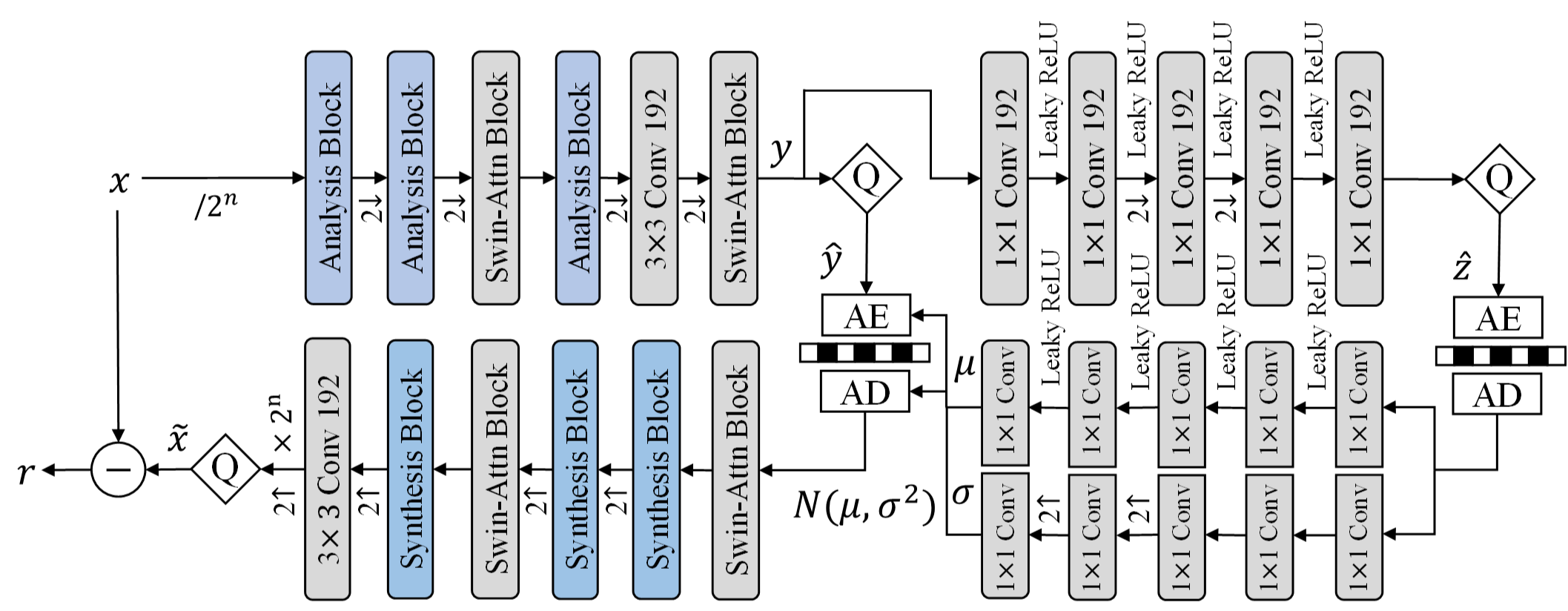}
\caption{Network architecture of lossy image compressor (LIC). We employ sophisticated image encoder/decoder while efficient hyper-prior model. The channel numbers are set uniformly to 192 for all layers. (AE: arithmetic encoding. AD: arithmetic decoding. Q: quantization.)}
\label{fig:lossy_compressor}
\end{figure*}

\begin{figure}[!t]
\centering
\includegraphics[width=0.99\linewidth]{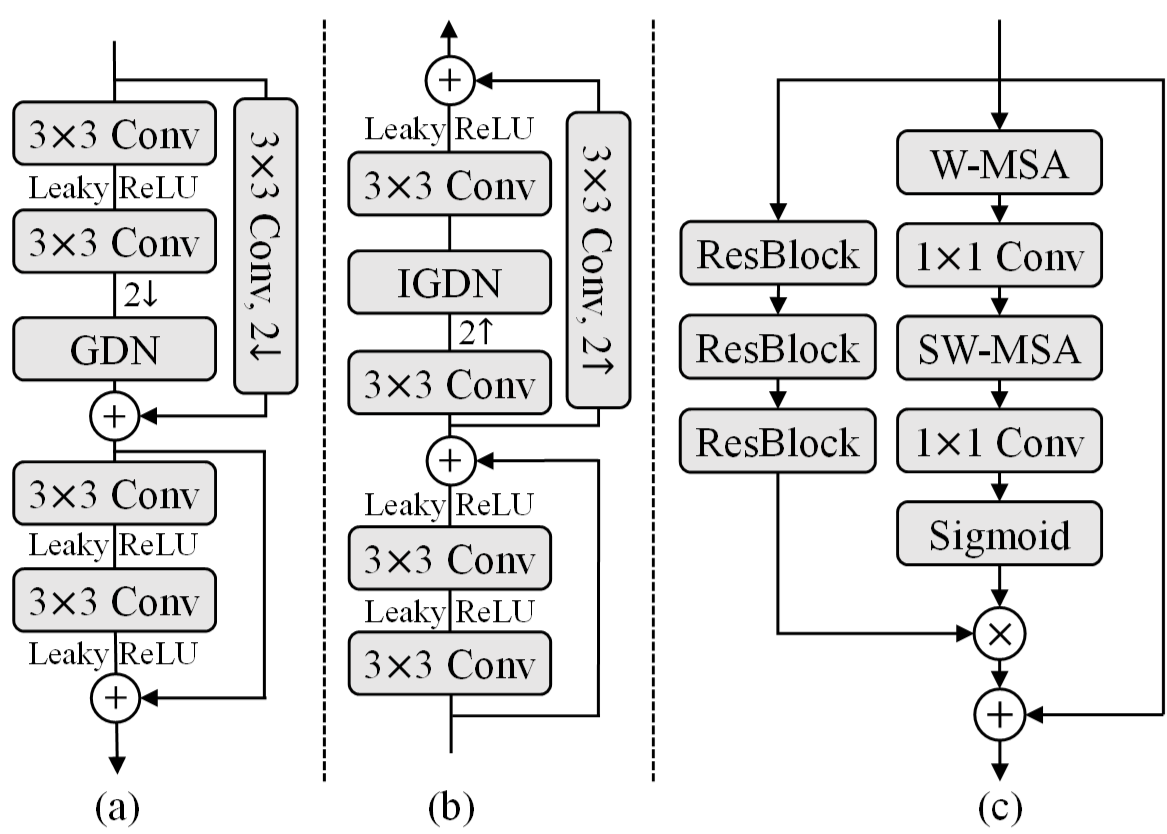}
\caption{Detailed structures of different blocks in LIC. (a) Analysis block. (b) Synthesis block. (c) Swin-Attention block. The window size and head number of Swin-Attention blocks are set to both $8$ for $4\times$ down-sampled feature maps, and are set to 4 and 8 for $16\times$ down-sampled feature maps. The channel numbers are set uniformly to 192 for all layers. (GDN: generalized divisive normalization \cite{gdn2016iclr}. IGDN: inverse GDN. W-MSA/SW-MSA: window/shifted window based multi-head self-attention \cite{liu2021swin}. ResBlock: residual block \cite{he2016deep}.)}
\label{fig:detailed_blocks}
\end{figure}

\subsubsection{Residual Compressor}
\label{subsec:rc}
Given the raw image $\x$ and its lossy reconstruction $\tx$ from LIC, we have the residual $\r=\x-\tx$. We next introduce RC to estimate the probability mass function (PMF) of $\r$ and compress $\r$ with arithmetic coding \cite{arithmetic_coding} accordingly.

Denote by $\u=g_\u(\hy)$, where the feature $\u$ is generated from $\hy$ by $g_\u(\cdot)$. The $g_\u(\cdot)$ and the image decoder $g_d(\cdot)$ share the network except the last convolutional layer, as shown in Fig.\;\ref{fig:network_architecture}. We interpret $\u$ as the feature of the residual $\r$ given $\tx$ and $\hy$. The feature $\u$ shares the same height and width with $\r$ and has 256 channels.
Unlike the latent representation $\hy$ of which the spatial redundancies are largely reduced by the image encoder, the residual $\r$ in the pixel domain has spatial redundancies that cannot be fully exploited by only the feature $\u$. Therefore, we further introduce the autoregressive model into the statistical modeling of $\r$, leading to
\begin{equation}
    p_\btheta(\r|\tx, \hy) = p_\btheta(\r|\u) = \prod_{i,c} p_\btheta (r_{i,c}|\u, r_{<(i,c)})
    \label{eq:p_r_u_ag}
\end{equation}
where $r_{<(i,c)}$ denotes the elements of $\r$ encoded or decoded before $r_{i,c}$ in a pre-defined scan order.
In practice, we implement spatial autoregressive model using a mask convolutional layer with a specific receptive field, rather than depending on all elements in $r_{<(i,c)}$. We regard the receptive field of the mask convolutional layer as the context $C_\r$. Based on \eqref{eq:p_r_u_ag} and $C_\r$, we reformulate the $R_\r$ in \eqref{eq:lossy_residue_code} as
\begin{equation}
    R_{\r}=\bbE_{p(\x)}\bbE_{q_\bphi(\hy,\hz|\x)}\left[-\log p_\btheta (\r|\u, C_\r)\right]
    \label{eq:residual_coding}
\end{equation}

Specifically, we utilize a $7\times 7$ mask convolutional layer with 256 channels to extract the context $C_{r_i}\in C_\r$ from $r_{<(i,c)}$. The $C_{r_i}$ is shared by $r_{i,c}$ of all channels. For RGB images with three channels, we have
\begin{equation}
    p_\btheta(\r|\u,C_\r)=\prod_i p_\btheta(r_{i,1}, r_{i,2}, r_{i,3}|u_i, C_{r_i})
    \label{eq:p_r_channels}
\end{equation}
We further adopt a channel autoregressive scheme over $r_{i,1}$, $r_{i,2}$, $r_{i,3}$ \cite{pixelcnn_pp} and reformulate $p_\btheta(r_{i,1}, r_{i,2}, r_{i,3}|u_i, C_{r_i})$ as
\begin{align}
    p_\btheta(r_{i,1},r_{i,2},r_{i,3}|u_i, C_{r_i})&= p_\btheta(r_{i,1}|u_i, C_{r_i})\cdot  \label{eq:r_chain_rule} \\
    p_\btheta(r_{i,2}|r_{i,1},u_i, C_{r_i}&)\cdot p_\btheta(r_{i,3}|r_{i,1},r_{i,2},u_i, C_{r_i})  \notag
\end{align}
We model the PMF of $r_{i,c}$ with discrete logistic mixture likelihood \cite{pixelcnn_pp} and propose a sub-network to estimate the corresponding entropy parameters, including mixture weights $\pi_{i}^k$, means $\mu_{i,c}^k$, variances $\sigma_{i,c}^k$ and mixture coefficients $\beta_{i,t}$.
$k$ denotes the index of the $k$-th logistic distribution. $t$ denotes the channel index of $\beta$.
The network architecture of the entropy model is shown in Fig.\;\ref{fig:entropy_model}.
We utilize a mixture of $K=5$ logistic distributions.
The channel autoregressive scheme over $r_{i,1},r_{i,2},r_{i,3}$ is implemented by updating the means using:
\begin{align}
    \tmu_{i,1}^k=\mu_{i,1}^k,\quad \tmu_{i,2}^k=\mu_{i,2}^k+\beta_{i,1}\cdot r_{i,1}, \notag \\
    \tmu_{i,3}^k=\mu_{i,3}^k+\beta_{i,2}\cdot r_{i,1}+\beta_{i,3}\cdot r_{i,2}
    \label{eq:channel_ar}
\end{align}
With $\pi_{i}^k$, $\tmu_{i,c}^k$ and $\sigma_{i,c}^k$, we have
\begin{equation}
    p_\btheta(r_{i,c}|r_{i,<c}, u_i, C_{r_i})\sim\sum_{k=1}^{K} \pi_{i}^k \logistic(\tmu_{i,c}^k, \sigma_{i,c}^k)
    \label{eq:factorized_pr}
\end{equation}
where $\logistic(\cdot)$ denotes the logistic distribution. For discrete $r_{i,c}$, we evaluate $p_\btheta(r_{i,c}|r_{i,<c}, u_i, C_{r_i})$ as \cite{pixelcnn_pp}:
\begin{equation}
    \sum_{k=1}^{K} \pi_{i}^k \left[S\left(\frac{r_{i,c}^{+}-\tmu_{i,c}^k}{\sigma_{i,c}^k}\right)-S\left(\frac{r_{i,c}^{-}-\tmu_{i,c}^k}{\sigma_{i,c}^k}\right)\right]
    \label{eq:r_eval}
\end{equation}
where $S(\cdot)$ denotes the sigmoid function. $r_{i,c}^{+}=r_{i,c}+0.5$ and $r_{i,c}^{-}=r_{i,c}-0.5$.
The probability inference scheme of $\r$ is sketched in Fig.\;\ref{fig:rc_sqrc}a.

\begin{figure}[!t]
\centering
\includegraphics[width=0.9\linewidth]{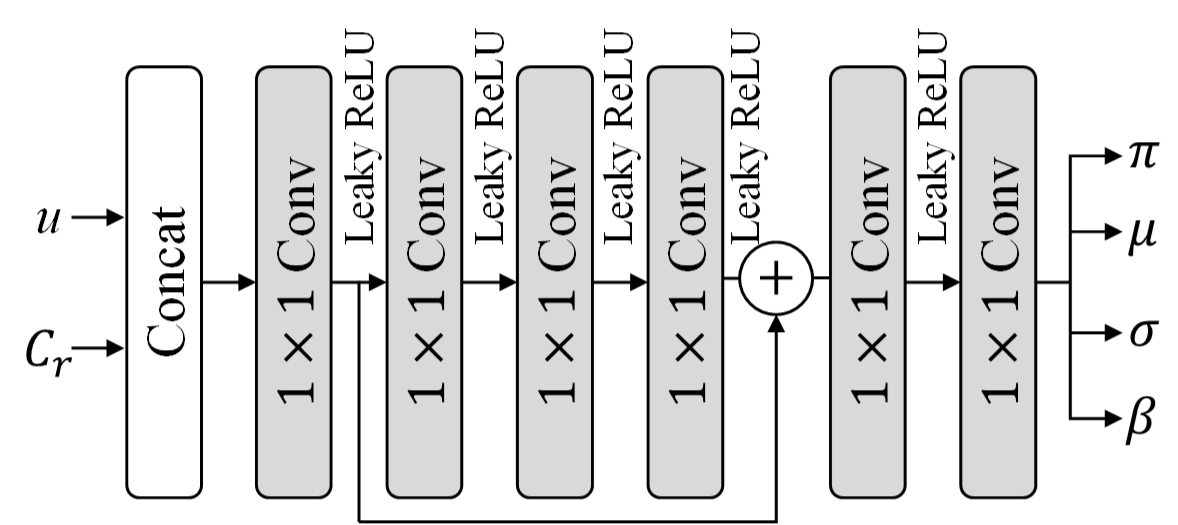}
\caption{Network architecture of entropy model in RC. Given $\u$ and $C_\r$, the entropy model estimates parameters of discrete logistic mixture likelihoods corresponding to the probability distribution of $\r$. All $1\times 1$ convolutional layers except the last layer have 256 channels. The last convolutional layer has $10\cdot K$ channels split by $\pi$, $\mu$, $\sigma$ and $\beta$.}
\label{fig:entropy_model}
\end{figure}

\subsubsection{Scalable Quantized Residual Compressor}
\label{subsubsec:sqrc}
\begin{figure*}[!t]
\centering
\includegraphics[width=0.99\linewidth]{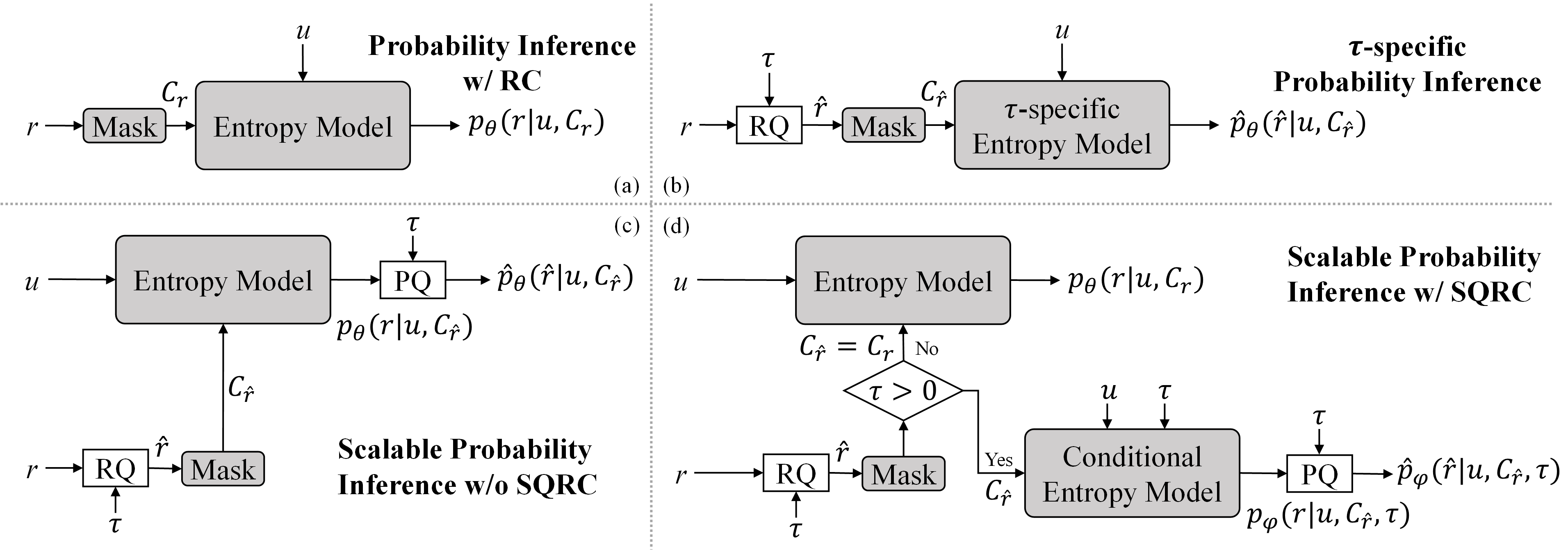}
\caption{Probability inferences of residuals and quantized residuals. (a) Probability inference of residuals with RC. (b) Probability inference of quantized residuals with $\tau$-specific scheme. (c) Scalable probability inference of quantized residuals without SQRC.(d) Scalable probability inference of quantized residuals with SQRC. (RQ: residual quantization. PQ: PMF quantization.)}
\label{fig:rc_sqrc}
\end{figure*}

We finally introduce SQRC to realize scalable near-lossless image compression with variable $\ell_\infty$ bound $\tau\in\{1,2,\ldots\}$.
Though near-lossless image compression given a specific $\tau$ can be realized by optimizing \eqref{eq:lossy_residue_code_nll}, this $\tau$-specific scheme in Fig.\;\ref{fig:rc_sqrc}b leads to two problems:
\begin{itemize}
    \item \textit{Relaxation problem of residual quantization.} Unlike rounding quantization, the bin size of the residual quantization \eqref{eq:r_quantization} is much larger. Moreover, the original residuals are not uniformly distributed in each bin, and thus cannot be relaxed by adding uniform noise.
    \item \textit{Storage problem of multiple networks.} To deploy the near-lossless codec, we have to transmit and store multiple networks for different $\tau$'s, which is storage-inefficient.
\end{itemize}

Instead, we propose a scalable near-lossless image compression scheme, which can circumvent the relaxation of residual quantization and utilize a single network to satisfy variable $\ell_\infty$ error bound $\tau\in\{1,2,\ldots\}$.
Specifically, the scalable compression scheme is based on the learned lossless compression with the DLPR coding framework.
We keep the lossy reconstruction $\tx$ fixed and quantize the original residual $\r$ to $\hr$ with variable $\tau$'s by \eqref{eq:r_quantization}.
To encode the quantized $\hr$, we can derive the PMF of $\hr$ from the learned PMF of the original $\r$. Given $\tau$ and the learned PMF $p_\btheta(r_{i,c}|r_{i,<c}, u_i,C_{r_i})$ of original $r_{i,c}$, the PMF $\hp_\btheta(\hat{r}_{i,c}|r_{i,<c}, u_i,C_{r_i})$ of quantized $\hat{r}_{i,c}$ can be computed by the following PMF quantization:
\begin{equation}
    \hp_\btheta(\hat{r}_{i,c}|r_{i,<c},u_i,C_{r_i})=\sum_{\mathclap{v=\hat{r}_{i,c}-\tau}}^{\hat{r}_{i,c}+\tau}p_\btheta(v|r_{i,<c},u_i,C_{r_i})
    \label{eq:p_rq_ideal}
\end{equation}
We show an illustrative example in Fig.\;\ref{fig:r_quantization}.
Together with \eqref{eq:p_r_channels} and \eqref{eq:r_chain_rule}, we can derive the probability model $\hp_\btheta(\hr|\u,C_\r)$ of $\hr$, which is optimal given the learned $p_\btheta(\r|\u,C_\r)$ of $\r$.
The resulting cost of encoding $\hr$, denoted by $R^\tau_{\hr}$, is reduced significantly with the increase of $\tau$.

However, encoding $\hr$ with $\hp_\btheta(\hr|\u,C_\r)$ results in undecodable bitstreams, since the original residual $\r$ is unknown to the decoder. $\hp_\btheta(\hat{r}_{i,c}|r_{i,<c}, u_i, C_{r_i})$ cannot be evaluated without $r_{i,<c}$ and causal context $C_{r_i}$.
Instead, we can evaluate PMF using the quantized residual $\hr$, \emph{i.e.}, we evaluate $p_\btheta(r_{i,c}|\hat{r}_{i,<c}, u_i, C_{\hat{r}_i})$ and derive $\hp_\btheta(\hat{r}_{i,c}|\hat{r}_{i,<c},u_i,C_{\hat{r}_i})$ with \eqref{eq:p_rq_ideal}, leading to $\hp_\btheta(\hr|\u,C_\hr)$ for the encoding of $\hr$.
Because of the mismatch between training (with $\r$) and inference (with $\hr$) phases, it leads to biased PMF $p_\btheta(r_{i,c}|\hat{r}_{i,<c}, u_i, C_{\hat{r}_i})$, $\hp_\btheta(\hat{r}_{i,c}|\hat{r}_{i,<c},u_i,C_{\hat{r}_i})$ and $\hp_\btheta(\hr|\u,C_\hr)$.
The above probability inference scheme is sketched in Fig.\;\ref{fig:rc_sqrc}c.

\begin{figure}[!t]
\centering
\includegraphics[width=0.95\linewidth]{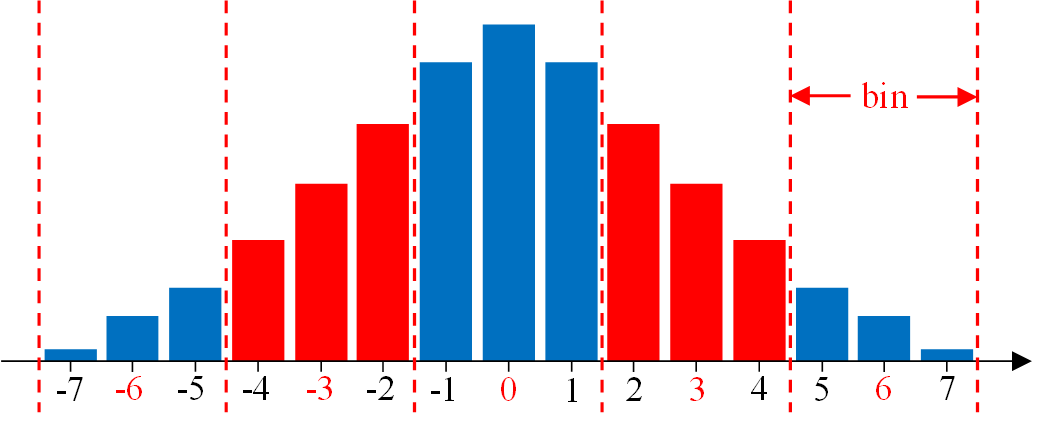}
\caption{PMF quantization corresponding to residual quantization \eqref{eq:r_quantization} with $\tau=1$. Each red number is the value of the quantized residual $\hat{r}_{i,c}$. The probability of each quantized value is the sum of the probabilities of the values in the same bin.}
\label{fig:r_quantization}
\end{figure}

\textbf{SQRC for Bias Correction:} Because of the discrepancy between the oracle $\hp_\btheta(\hr|\u,C_\r)$ and the biased $\hp_\btheta(\hr|\u,C_\hr)$, encoding $\hr$ with $\hp_\btheta(\hr|\u,C_\hr)$ degrades the compression performance.
In order to tackle this problem, we propose SQRC for bias correction to close the gap between the oracle $\hp_\btheta(\hr|\u,C_\r)$ and the biased $\hp_\btheta(\hr|\u,C_\hr)$, while the resulting bitstreams are still decodable.
The components of SQRC are illustrated in Fig.\;\ref{fig:network_architecture}. The masked convolutional layer in SQRC is shared with that in RC. The conditional entropy model has the same network architecture as the entropy model illustrated in Fig.\;\ref{fig:entropy_model}, but replaces the convolutional layers with the conditional convolutional layers \cite{pixelcnn,Choi2019iccv} illustrated in Fig.\;\ref{fig:conditional_cnn}.

The probability inference scheme with SQRC is sketched in Fig.\;\ref{fig:rc_sqrc}d. For $\tau=0$, we still select the entropy model in RC to estimate $p_\btheta(\r|\u, C_{\r})$ to encode $\r$. For $\tau\in\{1,2,\ldots,N\}$, we select the conditional entropy model in SQRC to estimate $p_\bvarphi(\r|\u, C_{\hr}, \tau)$ conditioned on $\tau$. We then derive $\hp_\bvarphi(\hr|\u, C_{\hr},\tau)$ with \eqref{eq:p_rq_ideal} to encode $\hr$, where $\bvarphi$ denote the parameters of SQRC.
As $\hp_\bvarphi(\hr|\u, C_{\hr},\tau)$ approximates the oracle $\hp_\btheta(\hr|\u,C_\r)$ better than the biased $\hp_\btheta(\hr|\u,C_\hr)$, the compression performance can be improved. Since evaluating $\hp_\bvarphi(\hr|\u, C_{\hr},\tau)$ is independent of $\r$, the resulting bitstreams are decodable. In experiments, we demonstrate that the proposed scalable near-lossless compression scheme with SQRC in Fig.\;\ref{fig:rc_sqrc}d can outperform both the $\tau$-specific near-lossless scheme in Fig.\;\ref{fig:rc_sqrc}b and the scalable near-lossless compression scheme without SQRC in Fig.\;\ref{fig:rc_sqrc}c.

\subsection{Training Strategy of DLPR coding}
\begin{figure}[!t]
\centering
\includegraphics[width=0.95\linewidth]{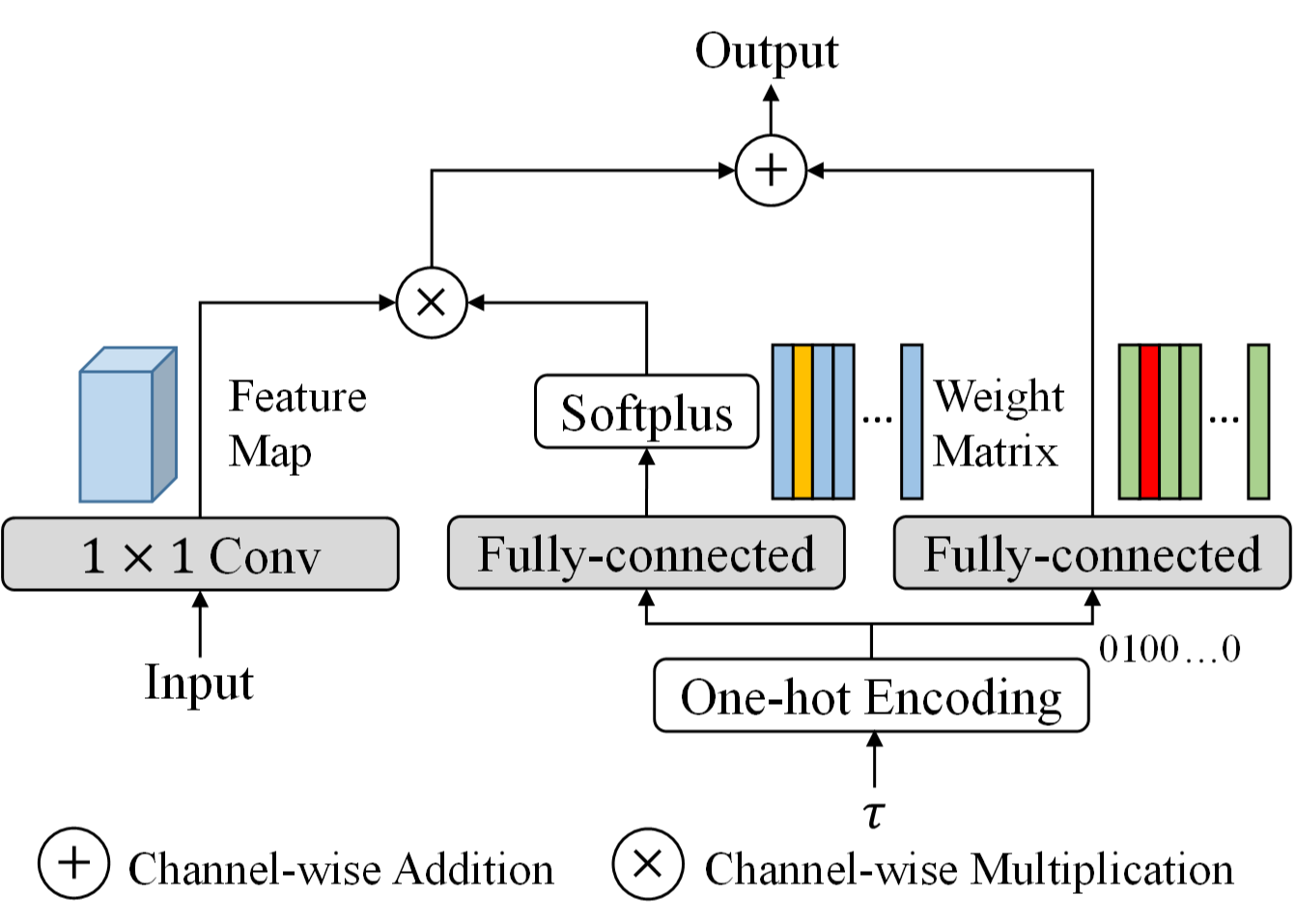}
\caption{Conditional convolutional layer for SQRC. Different outputs can be generated conditioned on $\tau\in\{1,2,\ldots,N\}$. We set $N=5$ in this paper.}
\label{fig:conditional_cnn}
\end{figure}

\subsubsection{Training LIC and RC}
The full loss function for jointly optimizing LIC and RC, \emph{i.e.}, DLPR coding for lossless image compression, is
\begin{equation}
    \mathcal{L}(\btheta, \bphi)=R_{\hy,\hz}+R_{\r}+\lambda\cdot D_{ls}
    \label{eq:loss_func}
\end{equation}
where $\btheta$ and $\bphi$ are the learned parameters of LIC and RC. Besides rate terms $R_{\hy,\hz}$ in \eqref{eq:lossy_hp} and $R_{\r}$ in \eqref{eq:residual_coding}, we further introduce a distortion term $D_{ls}(\x, \tx)$ to minimize the mean square error (MSE) between the raw image $\x$ and its lossy reconstruction $\tx$:
\begin{equation}
    D_{ls}(\x, \tx) = \bbE_{p(\x)}\bbE_{i,c}(x_{i,c}-\tilde{x}_{i,c})^2
    \label{eq:mse_loss}
\end{equation}
As discussed in \cite{Balle2018variational}, minimizing MSE loss is equivalent to learn a LIC that fits residual $\r$ to a factorized Gaussian distribution. However, the discrepancy between the real distribution of $\r$ and the factorized Gaussian distribution is usually large. Therefore, we utilize a sophisticated discrete logistic mixture likelihood model to encode $\r$ in our DLPR coding framework.

The $\lambda$ in \eqref{eq:loss_func} is a ``rate-distortion'' trade-off parameter between the lossless compression rate and the MSE distortion.
When $\lambda=0$, the loss function \eqref{eq:loss_func} is consistent with the theoretical DLPR coding formulation \eqref{eq:lossy_residue_code}, and $\tx$ becomes a latent variable without any constraints.
In experiments, we study the effects of $\lambda$'s on the lossless and near-lossless image compression performance.
We set $\lambda=0$ leading to the best lossless image compression, while set $\lambda=0.03$ leading to robust near-lossless image compression with variable $\tau$'s.

\subsubsection{Training SQRC}
For training SQRC, we generate random $\tau\in\{1,2,\ldots,N\}$ and quantize $\r$ to $\hr$ with \eqref{eq:r_quantization}. Given $\u$ and the extracted context $C_\hr$ from quantized $\hr$, we use the conditional entropy model to estimate $-\log p_\bvarphi(\r|\u, C_{\hr}, \tau)$ conditioned on different $\tau$'s, and minimize
\begin{equation}
    \mathcal{L}(\bvarphi)=\bbE_{p(\x)}\bbE_{q_\bphi(\hy,\hz|\x)}\bbE_{\tau}\left[\log \frac{p_\btheta(\r|\u, C_{\r})}{p_\bvarphi(\r|\u, C_{\hr}, \tau)}\right]
    \label{eq:relative_entropy}
\end{equation}
where $\bvarphi$ denote the learned parameters of the conditional entropy model. $-\log p_\btheta(\r|\u, C_{\r})$ is estimated by the entropy model in RC. $\mathcal{L}(\bvarphi)$ can be considered as an approximate KL-divergence or relative entropy \cite{info_theory} between $p_\btheta(\r|\u, C_{\r})$ and $p_\bvarphi(\r|\u, C_{\hr}, \tau)$.

SQRC is trained together with LIC and RC, but minimizing \eqref{eq:relative_entropy} only updates the parameters of the conditional entropy model as shown in Fig.\;\ref{fig:network_architecture}. The masked convolutional layer is shared with that in RC, and thus can be updated by minimizing \eqref{eq:loss_func}.
This leads to three advantages: 1) We can achieve the target conditional entropy model to close the gap between training with $\r$ and inference with $\hr$; 2) We can circumvent the aforementioned relaxation problem of residual quantization; 3) We can avoid degrading the estimation of $p_\btheta(\r|\u, C_{\r})$ in RC caused by training with randomly generated $\tau$.
Because the entropy model in RC receives the context $C_\r$ extracted from the original residual $\r$, $p_\btheta(\r|\u, C_{\r})$ approximates the true distribution $p(\r|\tx,\hy)$ better than $p_\bvarphi(\r|\u, C_{\hr}, \tau)$. Thus, $-\log p_\btheta(\r|\u, C_{\r})$ is the lower bound of $-\log p_\bvarphi(\r|\u, C_{\hr}, \tau)$ on average.

\subsection{Acceleration of DLPR Coding}
In order to realize practical DLPR coding, the bottleneck is the serialized autoregressive model in RC and SQRC, which severely limits the coding speed on GPUs.
We thus propose a novel design of context coding to increase the degree of algorithm parallelization, and further accelerate the entropy coding with adaptive residual interval.

\subsubsection{Context Design and Parallelization}
\begin{figure}[!t]
\begin{center}
\subfloat[]{
\label{fig:coding_seq}
\includegraphics[width=0.47\linewidth]{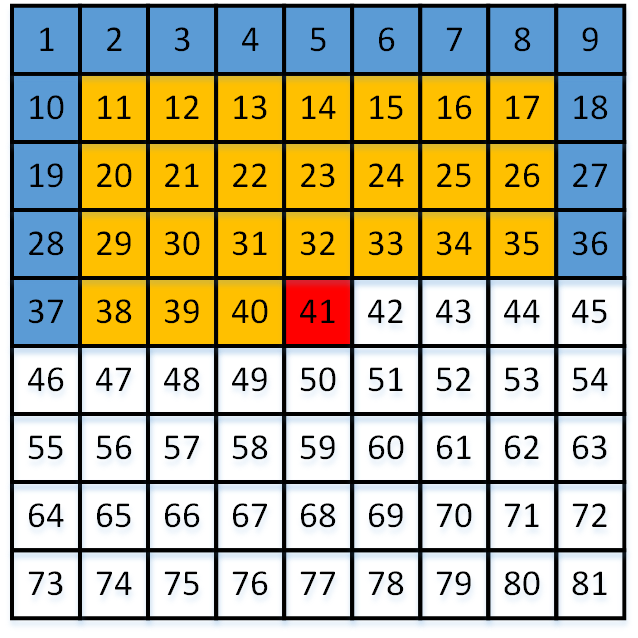}}
\subfloat[]{
\label{fig:coding_acc1}
\includegraphics[width=0.47\linewidth]{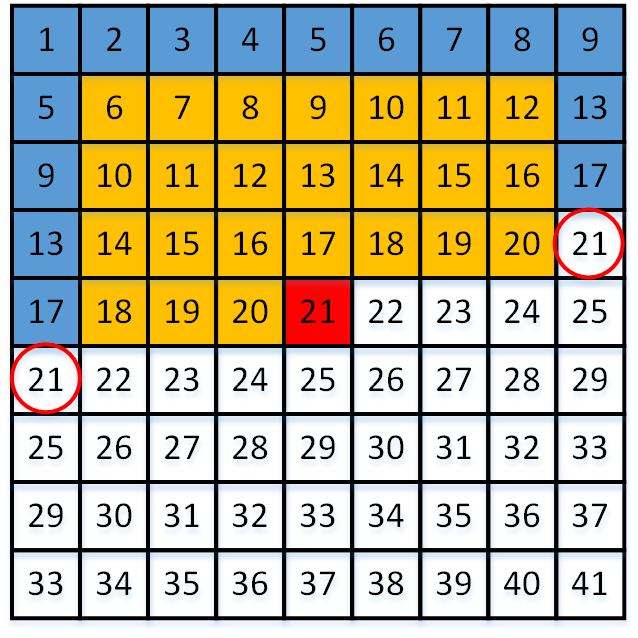}}\\
\subfloat[]{
\label{fig:coding_acc2}
\includegraphics[width=0.47\linewidth]{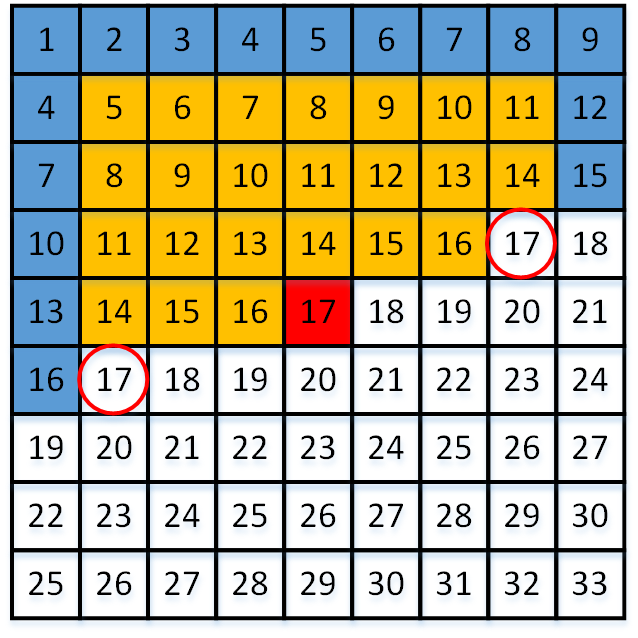}}
\subfloat[]{
\label{fig:coding_acc3}
\includegraphics[width=0.47\linewidth]{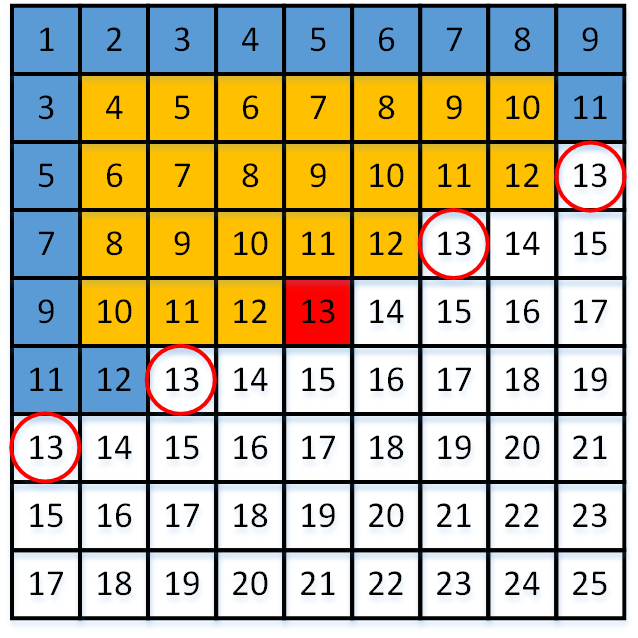}}\\
\subfloat[]{
\label{fig:coding_acc4}
\includegraphics[width=0.47\linewidth]{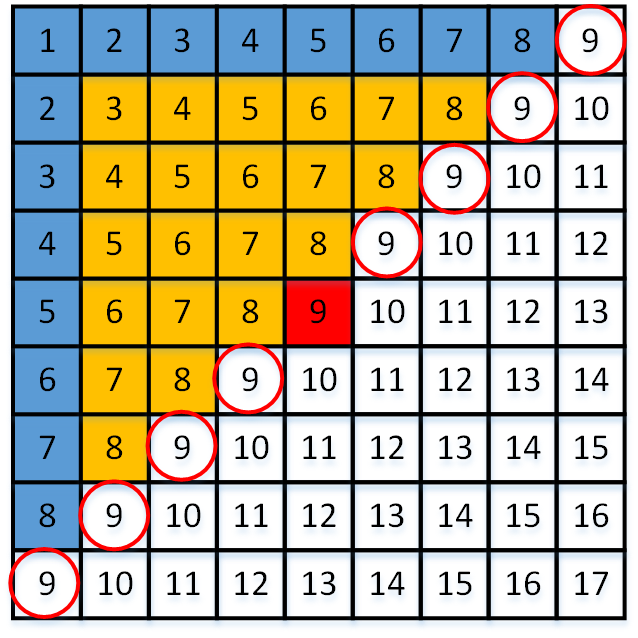}}
\subfloat[]{
\label{fig:coding_acc5}
\includegraphics[width=0.47\linewidth]{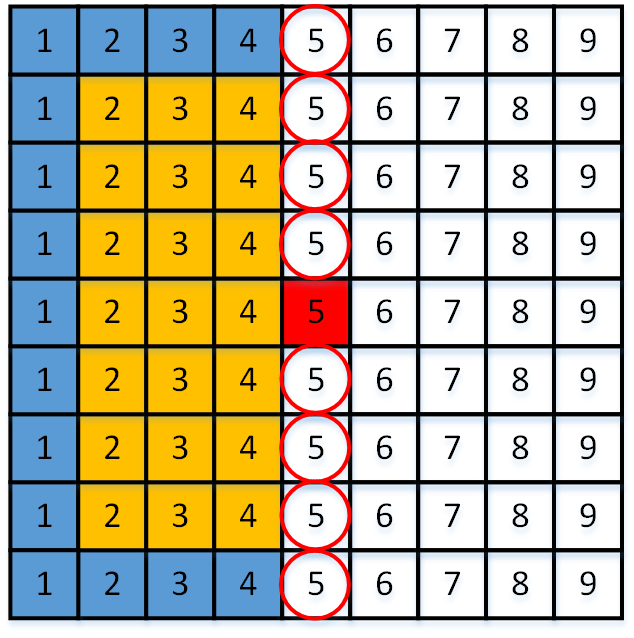}}
\end{center}
\caption{Context design and parallelization (patch size $P=9$, kernel size $k=7$). (a) context model $M_7^5$: raster scan order, $P^2=81$ serial decoding steps. (b) context model $M_7^5$: $14.04^{\circ}$ parallel scan, $5P-4=41$ decoding steps. (c) context model $M_7^4$: $18.43^{\circ}$ parallel scan, $4P-3=33$ decoding steps. (d) context model $M_7^3$: $26.57^{\circ}$ parallel scan, $3P-2=25$ decoding steps. (e) context model $M_7^2$: $45^{\circ}$ parallel scan, $2P-1=17$ decoding steps. (f) context model $M_7^1$: $90^{\circ}$ parallel scan, $P=9$ decoding steps.}
\label{fig:coding_acceleration}
\end{figure}

Generally, autoregressive models suffer from serialized decoding and cannot be efficiently implemented on GPUs. Given an $H\times W$ image, we need to compute $HW$ times context model to decode all pixels sequentially. In lossy image compression, checkerboard context model \cite{he2021cvpr} and channel-wise context model \cite{minnen2020channel} were introduced to accelerate the probability inference of latent variables. However, these two context models are too weak for our residual coding and result in significant performance degradation, without the help of transform coding.

To improve the parallelization of residual coding, we first adopt a common operation to split an $H\times W$ image into multiple non-overlapping $P\times P$ patches and code all $P\times P$ patches in parallel, reducing $HW$ times sequential context computations to $P^2$ times.
We next propose a novel design of context coding to improve the algorithm parallelization given the $P\times P$ patches and $k\times k$ mask convolution, as illustrated in Fig.\;\ref{fig:coding_acceleration}.
Assuming that $P=9$ and $k=7$, we need $P^2=81$ sequential decoding steps in raster scan order for the commonly used context model shown in Fig.\;\ref{fig:coding_seq}.
The number in each pixel denotes the time step $t$ at which the pixel is decoded.
Since $P>\lceil\frac{k}{2}\rceil$ is usually satisfied, the currently decoded pixel only depends on some of the previously decoded pixels.
For example, the red pixel is decoded currently and the yellow pixels are its context. The blue pixels are previously decoded pixels but are not included in the context of the red pixel.
Hence, there are pixels that can be potentially decoded in parallel by revising the scan order.
By using $\frac{180}{\pi}\cdot \arctan(\frac{2}{k+1})$ degree parallel scan, the pixels with the same number $t$ can be decoded simultaneously, as shown in Fig.\;\ref{fig:coding_acc1}. The number of decoding steps is reduced from $P^2$ to $\frac{k+3}{2}\cdot P - \frac{k+1}{2}$. In this case, we use $14.04^{\circ}$ parallel scan, leading to $5P-4=41$ sequential decoding steps. The similar scan order was also used in \cite{zhang2022parallel}.
Moreover, we can remove one context pixel in the upper right of the currently decoded pixel. The newly designed context model leads to $\frac{180}{\pi}\cdot \arctan(\frac{2}{k-1})$ degree parallel scan and reduces decoding steps to $\frac{k+1}{2}\cdot P - \frac{k-1}{2}$.
As shown in Fig.\;\ref{fig:coding_acc2}, we use $18.43^{\circ}$ parallel scan and $4P-3=33$ decoding steps with this context model. When more upper-right context pixels are removed, the coding parallelization can be further improved while the compression performance is gradually compromised on. As shown in Fig.\;\ref{fig:coding_acc3}, the context model leads to $\frac{180}{\pi}\cdot \arctan(\frac{2}{k-3})$ degree parallel scan and $\frac{k-1}{2}\cdot P - \frac{k-3}{2}$ decoding steps, \emph{i.e.}, $26.57^\circ$ parallel scan and $3P-2=25$ decoding steps in our example. When $45^\circ$ parallel scan is reached, this special case is the zig-zag scan \cite{Li2020efficient} and we need $2P-1=17$ decoding steps, as shown in Fig.\;\ref{fig:coding_acc4}. Finally, the fastest case is shown in Fig.\;\ref{fig:coding_acc5}. We can use $90^\circ$ parallel scan and only $P=9$ decoding steps.

In summary, the proposed design of context coding demonstrates that: \textit{given $P\times P$ image patches and $k\times k$ mask convolution with $P>\lceil\frac{k}{2}\rceil$, we can design a series of context models $\{M^{(k+3)/2}_k, M^{(k+1)/2}_k, \ldots, M^1_k\}$ leading to $\{\frac{k+3}{2}\cdot P - \frac{k+1}{2}, \frac{k+1}{2}\cdot P - \frac{k-1}{2}, \ldots, P\}$ parallel decoding steps, by gradually adjusting the context pixels. The corresponding scan angles are $\{\frac{180}{\pi}\cdot \arctan(\frac{2}{k+1}), \frac{180}{\pi}\cdot \arctan(\frac{2}{k-1}), \ldots, 90^\circ\}$, respectively.}
In experiments, we set $P=64$, $k=7$ and select the context model $M^3_7$ shown in Fig.\;\ref{fig:coding_acc3}. The $M^3_7$ enjoys almost the same compression performance as $M_7^5$ in Fig.\;\ref{fig:coding_seq} and Fig.\;\ref{fig:coding_acc1}, but needs much fewer coding steps.

\subsubsection{Adaptive Residual Interval}
Since the pixels of both a raw image $\x$ and its lossy reconstruction $\tx$ are in the interval $[0, 255]$, the element $r_{i,c}$ of the corresponding residual $\r$ is in the interval $[-255, 255]$. To entropy coding each $r_{i,c}$, we need to compute and utilize PMF with $511$ elements, which is relatively large and slows down the entropy coding process.

In practice, the theoretical interval $[-255, 255]$ of $r_{i,c}$ can hardly be filled up. Hence, we can compute and record $r_{\min}=\min_{i,c} r_{i,c}$ and $r_{\max}=\max_{i,c} r_{i,c}$ of each image, and reduce the domain of PMF to the adaptive interval $[r_{\min}, r_{\max}]$ with $r_{\max}-r_{\min}+1$ elements. The overheads of recording the $r_{\min}$ and $r_{\max}$ can be amortized and ignored. For near-lossless image compression with $\tau>0$, we can similarly compute and record the quantized $\hat{r}_{\min}=\min_{i,c} \hat{r}_{i,c}$ and $\hat{r}_{\max}=\max_{i,c} \hat{r}_{i,c}$ of each image, and the domain of the quantized PMF can be further reduced to the adaptive interval $\{\hat{r}_{\min}, \hat{r}_{\min}+2\tau+1, \hat{r}_{\min}+2\cdot(2\tau+1), \ldots, \hat{r}_{\max}\}$, \emph{i.e.}, $\frac{\hat{r}_{\max}-\hat{r}_{\min}}{2\tau+1}+1$ elements in total.
The reduction of residual intervals can significantly accelerate the entropy coding on average.

\section{Experiments}
\label{sec:experiments}
\subsection{Experimental Settings}
We train the DLPR coding system on DIV2K high resolution training dataset \cite{div2k} consisting of 800 2K resolution RGB images. Although DIV2K is originally built for image super-resolution task, it contains large number of high-quality images that is suitable for training our codec.
During training, the 2K images are first cropped into non-overlapped $121379$ patches with the size of $128\times 128$. We then flip these patches horizontally and vertically with a random factor $0.5$, and further randomly crop the flipped patches to the size of $64\times 64$.
We optimize the proposed network for $600$ epochs using Adam \cite{kingma2015adam} with minibatches of size $64$. The learning rate is initially set to $1\times 10^{-4}$ and is decayed by $0.9$ at the epoch $\in[350, 390, 430, 470, 510, 550, 590]$.

We evaluate the trained DLPR coding system on six image datasets:
\begin{itemize}
    \item \textit{ImageNet64}. ImageNet64 validation dataset \cite{chrabaszcz2017downsampled} is a downsampled variant of ImageNet validation dataset \cite{deng2009imagenet}, consisting of $50000$ images of size $64\times 64$.
    \item \textit{DIV2K}. DIV2K high resolution validation dataset \cite{div2k} consists of 100 2K color images sharing the same domain with the DIV2K high resolution training dataset.
    \item \textit{CLIC.p}. CLIC professional validation dataset\footnote{https://www.compression.cc/challenge/\label{ftn:clic}} consists of 41 color images taken by professional photographers. Most images in CLIC.p are in 2K resolution but some of them are of small sizes.
    \item \textit{CLIC.m}. CLIC mobile validation dataset\textsuperscript{\ref{ftn:clic}} consists of 61 2K resolution color images taken with mobile phones. Most images in CLIC.m are in 2K resolution but some of them are of small sizes.
    \item \textit{Kodak}. Kodak dataset \cite{kodak} consists of 24 uncompressed $768\times 512$ color images, widely used in evaluating lossy image compression methods.
    \item \textit{Histo24}. Besides natural images, we build a Histo24 dataset consisting of 24 uncompressed $768\times 512$ histological images, in order to evaluate our codec on images of different modality. These histological images are randomly cropped from high resolution ANHIR dataset \cite{borovec2020anhir}, which is originally used for histological image registration task.
\end{itemize}

The DLPR coding system is implemented with Pytorch. We train the DLPR coding system on NVIDIA V100 GPU, while evaluate the compression performance and running time on Intel CPU i9-10900K, 64G RAM and NVIDIA RTX3090 GPU. We use \textit{torchac} \cite{Mentzer2019cvpr}, an arithmetic coding tool in Pytorch, for entropy coding.

\subsection{Lossless Results of DLPR coding}
\begin{table*}[!t]
\caption{Lossless image compression performance (bpsp) of the proposed DLPR coding system with $\lambda=0$, compared with other lossless image codecs on ImageNet64, DIV2K, CLIC.p, CLIC.m, Kodak and Histo24 datasets.}
\label{tb:results_ll}
\centering
\small
\begin{tabular}{lC{5em}C{5em}C{5em}C{5em}C{5em}C{5em}}
\toprule[1pt]
  Codec     &  ImageNet64 & DIV2K & CLIC.p & CLIC.m & Kodak & Histo24 \tabularnewline
\midrule
    PNG                                 & 5.42  & 4.23 & 3.93 & 3.93 & 4.35 & 3.79    \tabularnewline
    JPEG-LS \cite{weinberger2000loco}   & 4.45  & 2.99 & 2.82 & 2.53 & 3.16 & 3.39    \tabularnewline
    CALIC \cite{calic}                  & 4.71  & 3.07 & 2.87 & 2.59 & 3.18 & 3.48    \tabularnewline
    JPEG2000 \cite{skodras2001j2k}      & 4.74  & 3.12 & 2.93 & 2.71 & 3.19 & 3.36    \tabularnewline
    WebP \cite{webp}                    & 4.36  & 3.11 & 2.90 & 2.73 & 3.18 & 3.29    \tabularnewline
    BPG \cite{bpg}                      & 4.42  & 3.28 & 3.08 & 2.84 & 3.38 & 3.82    \tabularnewline
    FLIF \cite{sneyers2016flif}         & 4.25  & 2.91 & 2.72 & 2.48 & 2.90 & 3.23    \tabularnewline
    JPEG-XL \cite{alakuijala2019jpeg}   & 4.94  & 2.79 & 2.63 & 2.36 & 2.87 & 3.07    \tabularnewline
\midrule
    L3C \cite{Mentzer2019cvpr}          & 4.48  & 3.09 & 2.94 & 2.64 & 3.26 & 3.53     \tabularnewline
    RC  \cite{mentzer2020cvpr}          & $-$   & 3.08 & 2.93 & 2.54 & $-$  & $-$      \tabularnewline
    Bit-Swap \cite{kingma2019bitswap}   & 5.06   & $-$  & $-$  & $-$  & $-$  & $-$     \tabularnewline
    HiLLoC \cite{townsend2020hilloc}    & 3.90  & $-$  & $-$  & $-$  & $-$  & $-$     \tabularnewline
    IDF    \cite{max2019nips}           & 3.90  & $-$  & $-$  & $-$  & $-$  & $-$     \tabularnewline
    IDF++  \cite{berg2020idfpp}         & 3.81  & $-$  & $-$  & $-$  & $-$  & $-$     \tabularnewline
    LBB    \cite{ho2019compression}     & 3.70  & $-$  & $-$  & $-$  & $-$  & $-$     \tabularnewline
    iVPF \cite{zhang2021ivpf}           & 3.75  & 2.68 & 2.54 & 2.39 & $-$  & $-$    \tabularnewline
    iFlow \cite{zhang2021iflow}         & \textbf{3.65}  & 2.57 & 2.44 & 2.26 & $-$  & $-$     \tabularnewline
\midrule
    DLPR (Ours)                 & 3.69  & \textbf{2.55} & \textbf{2.38} & \textbf{2.16} & \textbf{2.86} & \textbf{2.96} \tabularnewline
\bottomrule[1pt]
\end{tabular}

\end{table*}

\begin{table*}[!t]
\caption{Near-lossless image compression performance (bpsp) of the proposed DLPR coding system with $\lambda=0.03$, compared with near-lossless JPEG-LS, near-lossless CALIC and near-lossless WebP on ImageNet64, DIV2K, CLIC.p, CLIC.m, Kodak and Histo24 datasets. $^*$\textit{The error bounds of near-lossless WebP are powers of two.}}
\label{tb:results_nll}
\centering
\small
\begin{tabular}{lcC{5em}C{5em}C{5em}C{5em}C{5em}C{5em}}
\toprule[1pt]
  Codec & $\tau^*$ & ImageNet64 & DIV2K & CLIC.p & CLIC.m & Kodak & Histo24\tabularnewline
\midrule
    \multirow{3}{*}{\makecell{WebP nll\cite{webp}}}    & 1  & 3.61  & 2.45 & 2.26 & 2.11 & 2.41 & 2.31 \tabularnewline
                                                       & 2  & 3.11  & 2.04 & 1.89 & 1.85 & 2.01 & 1.76 \tabularnewline
                                                       & 4  & 2.70  & 1.83 & 1.73 & 1.75 & 1.82 & 1.73 \tabularnewline
\midrule
    \multirow{3}{*}{\makecell{JPEG-LS\cite{weinberger2000loco}}}     & 1  & 4.01 & 2.62 & 2.34 & 2.44 & 2.90 & 1.99 \tabularnewline
                                                                     & 2  & 3.25 & 2.07 & 1.80 & 1.89 & 2.30 & 1.58 \tabularnewline
                                                                     & 4  & 2.49 & 1.53 & 1.28 & 1.35 & 1.68 & 1.24 \tabularnewline
\midrule
    \multirow{3}{*}{\makecell{CALIC\cite{Wu2000nll}}}       & 1  & 3.69  & 2.45 & 2.18 & 2.28 & 2.75 & 1.78 \tabularnewline
                                                            & 2  & 2.94  & 1.88 & 1.62 & 1.70 & 2.14 & 1.28 \tabularnewline
                                                            & 4  & 2.41  & 1.31 & 1.07 & 1.13 & 1.51 & 0.84 \tabularnewline
\midrule
    \multirow{3}{*}{\makecell{DLPR (Ours)}}  & 1 & 2.59  & 1.69 & 1.56 & 1.50 & 1.81 & 1.71 \tabularnewline
                                             & 2 & 2.06  & 1.26 & 1.13 & 1.09 & 1.37 & 1.23 \tabularnewline
                                             & 4 & 1.55  & 0.84 & 0.69 & 0.67 & 0.90 & 0.65 \tabularnewline
\bottomrule[1pt]
\end{tabular}
\end{table*}

We evaluate the lossless image compression performance of the proposed DLPR coding system, measured by \textit{bit per subpixel} (bpsp). Each RGB pixel has three subpixels. We compare with eight traditional lossless image codecs including PNG, JPEG-LS \cite{weinberger2000loco}, CALIC \cite{calic}, JPEG2000 \cite{skodras2001j2k}, WebP \cite{webp}, BPG \cite{bpg}, FLIF \cite{sneyers2016flif} and JPEG-XL \cite{alakuijala2019jpeg}, and nine recent learning-based lossless image compression methods including L3C \cite{Mentzer2019cvpr}, RC \cite{mentzer2020cvpr}, Bit-Swap \cite{kingma2019bitswap}, HiLLoC \cite{townsend2020hilloc}, IDF \cite{max2019nips}, IDF++ \cite{berg2020idfpp}, LBB \cite{ho2019compression}, iVPF \cite{zhang2021ivpf} and iFlow \cite{zhang2021iflow}.
For Bit-Swap, HiLLoC, IDF, IDF++ and LBB, their codes can hardly be applied on practical full resolution image compression tasks, and thus only be evaluated on ImageNet64 dataset. For RC, iVPF and iFlow, we report the compression performance published by their authors, because their codes are either difficult to be generalized to arbitrary datasets or unavailable.
We set $\lambda=0$ in \eqref{eq:loss_func} leading to the best lossless image compression performance.

As reported in Table\;\ref{tb:results_ll}, the proposed DLPR coding system achieves the best lossless compression performance on DIV2K validation dataset, which shares the same domain with the training dataset. The DLPR coding system also achieves the best compression performance on CLIC.p, CLIC.m, Kodak and Histo24 datasets, and achieves the second best compression performance on ImageNet64 validation dataset.
Though iFlow outperforms ours on ImageNet64 validation dataset, it is trained on ImageNet64 training dataset sharing the same domain while ours is trained on DIV2K.
The above results demonstrate that the DLPR coding system achieves the state-of-the-art lossless image compression performance and can be effectively generalized to images of various domains and modalities.

\subsection{Near-lossless Results of DLPR coding}
\begin{figure*}[!t]
\centering
\subfloat{
\includegraphics[width=.42\linewidth]{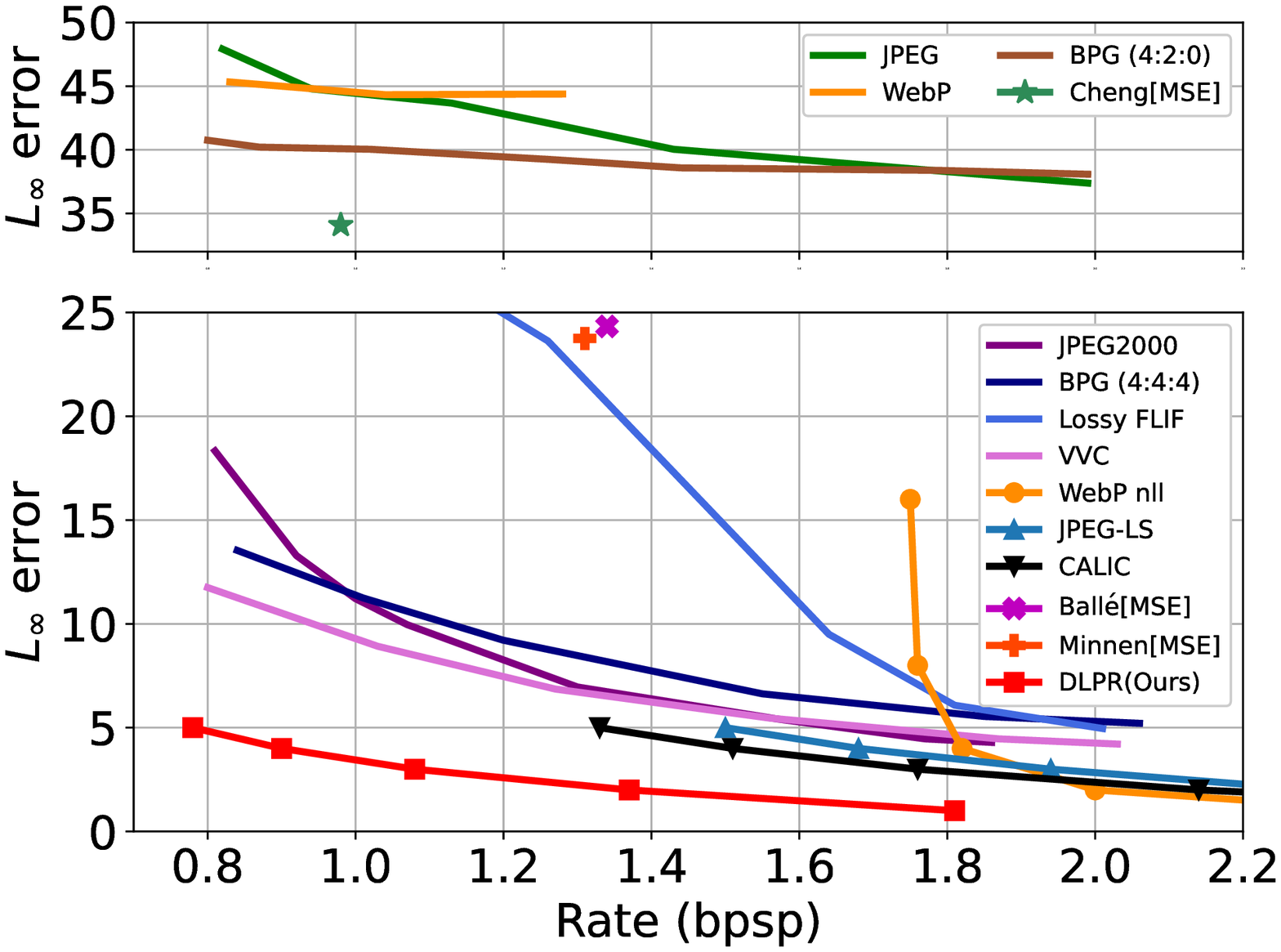}}
\subfloat{
\includegraphics[width=.42\linewidth]{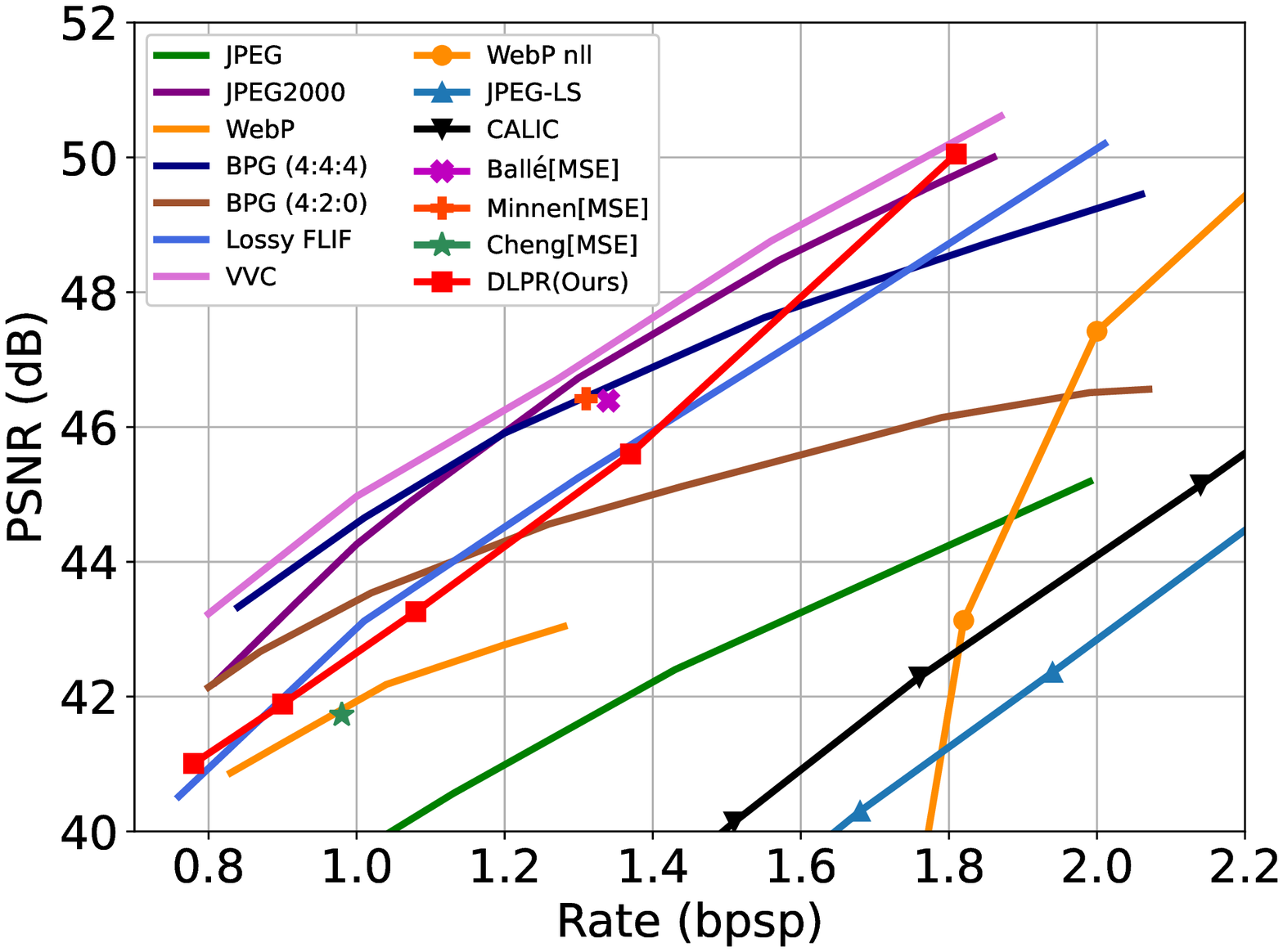}}
\caption{Rate-distortion performance of our DLPR coding system compared with other near-lossless image codecs and lossy image codecs on Kodak dataset.}
\label{fig:rd_kodak}
\end{figure*}

\begin{figure}[!t]
\begin{center}
\subfloat[Raw/Lossless]{
\label{fig:k07_t0}
\includegraphics[width=0.48\linewidth]{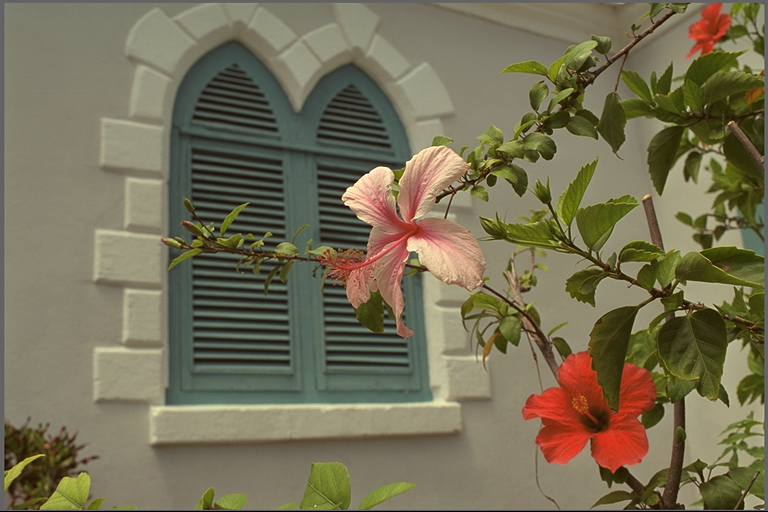}}
\subfloat[$\tau=1$]{
\label{fig:k07_t1}
\includegraphics[width=0.48\linewidth]{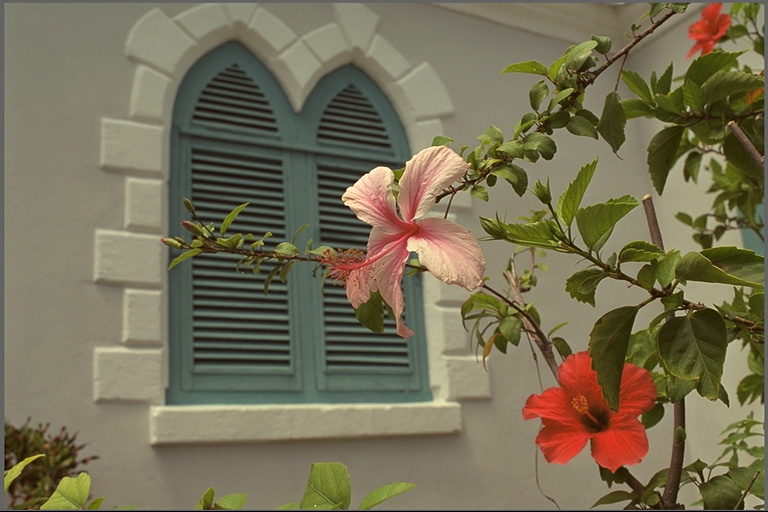}}\\
\subfloat[$\tau=2$]{
\label{fig:k07_t2}
\includegraphics[width=0.48\linewidth]{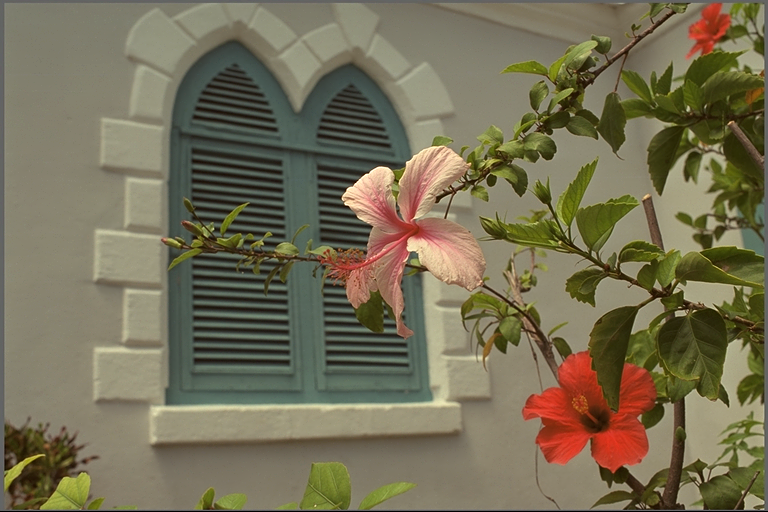}}
\subfloat[$\tau=4$]{
\label{fig:k07_t4}
\includegraphics[width=0.48\linewidth]{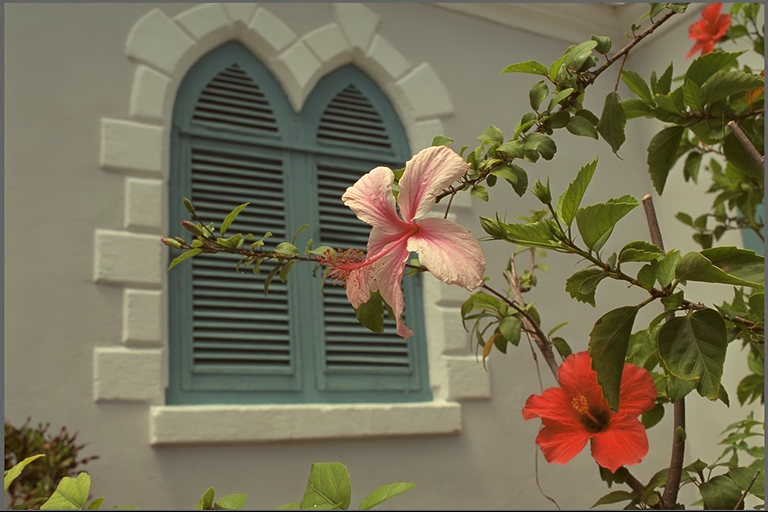}}
\end{center}
\caption{Near-lossless reconstructions of our DLPR coding system on Kodak dataset.}
\label{fig:subjective1}
\end{figure}

\begin{figure}[!t]
\begin{center}
\subfloat[Raw/Lossless]{
\label{fig:h1_t0}
\includegraphics[width=0.48\linewidth]{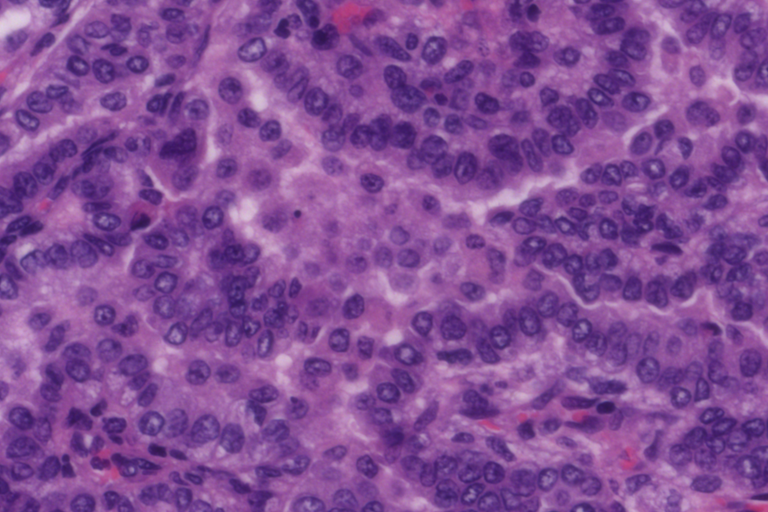}}
\subfloat[$\tau=1$]{
\label{fig:h1_t1}
\includegraphics[width=0.48\linewidth]{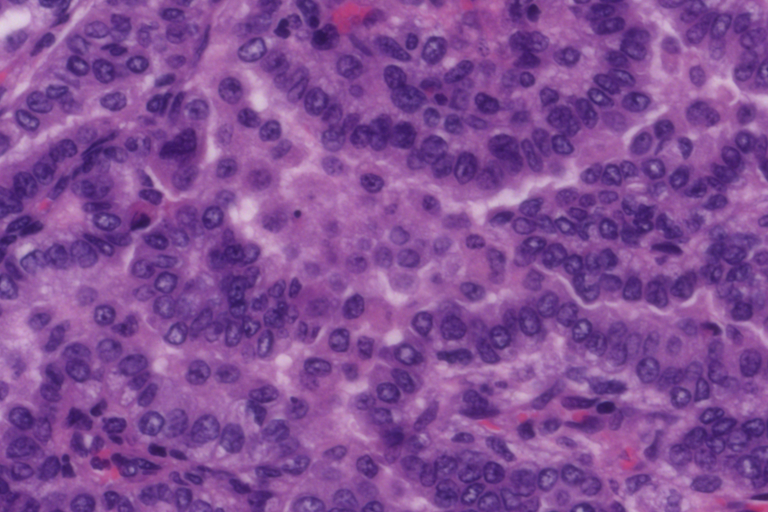}}\\
\subfloat[$\tau=2$]{
\label{fig:h1_t2}
\includegraphics[width=0.48\linewidth]{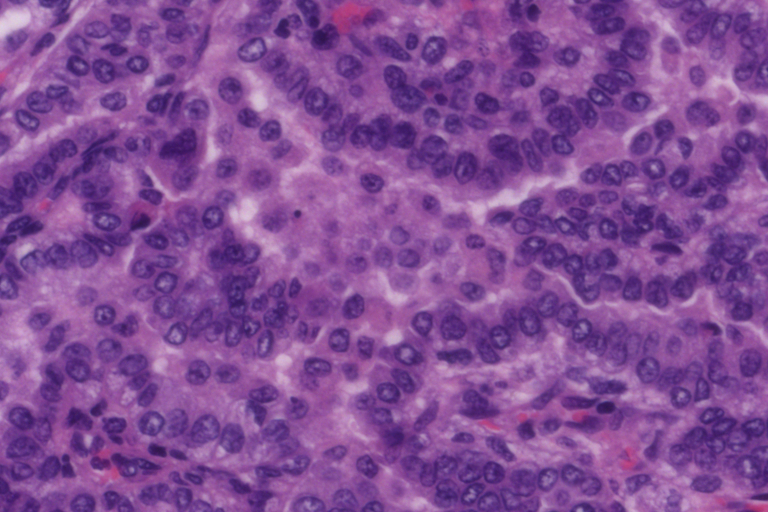}}
\subfloat[$\tau=4$]{
\label{fig:h1_t4}
\includegraphics[width=0.48\linewidth]{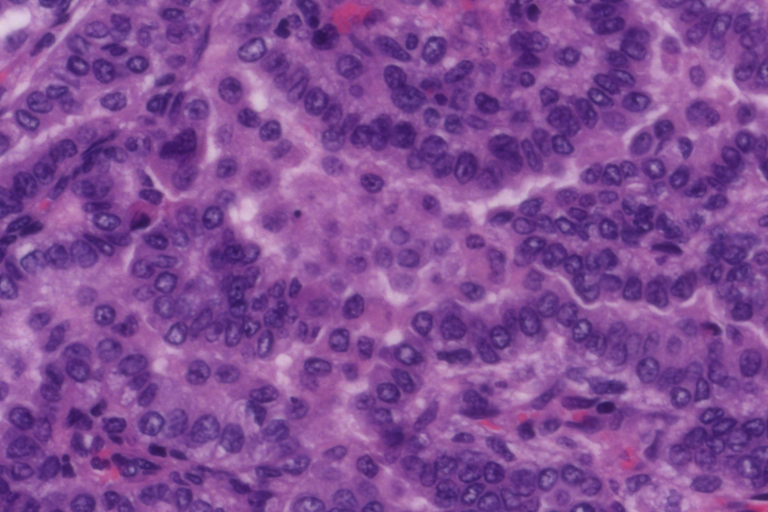}}
\end{center}
\caption{Near-lossless reconstructions of our DLPR coding system on Histo24 dataset.}
\label{fig:subjective2}
\end{figure}

We next evaluate the near-lossless image compression performance of the proposed DLPR coding system. We set $\lambda=0.03$ leading to the robust near-lossless image compression results with variable $\tau$'s. We compare with near-lossless WebP (WebP nll) \cite{webp}, near-lossless JPEG-LS \cite{weinberger2000loco} and near-lossless CALIC \cite{Wu2000nll}, as reported in Table\;\ref{tb:results_nll}. Near-lossless WebP adjusts pixel values to $\ell_\infty$ error bound $\tau$ and compresses the pre-processed images losslessly. Near-lossless JPEG-LS and CALIC adopt predictive coding schemes, and encode the residuals quantized by \eqref{eq:r_quantization}.
These three codecs handcraft the pre-processor, predictors and probability estimators, which are not efficient enough for variable $\tau$'s.
More efficiently, our DLPR coding system is based on jointly trained LIC, RC and SQRC. We employ \eqref{eq:r_quantization} to realize variable error bound $\tau$'s and the probability distributions of the quantized residuals are derived from the learned SQRC. Therefore, our DLPR coding system outperforms near-lossless WebP, JPEG-LS and CALIC by a wide margin.

Besides existing near-lossless image codecs, we also compare our near-lossless DLPR coding system with six traditional lossy image codecs, \emph{i.e.}, JPEG \cite{wallace1992jpeg}, JPEG2000 \cite{skodras2001j2k}, WebP \cite{webp}, BPG \cite{bpg}, Lossy FLIF \cite{sneyers2016flif} and VVC \cite{vvc}, and three representative learned lossy image compression methods, \emph{i.e.}, Ball\'{e}[MSE] \cite{Balle2018variational}, Minnen[MSE] \cite{minnen2018nips} and Cheng[MSE] \cite{cheng2020cvpr}, as shown in Fig.\;\ref{fig:rd_kodak}.
Because recent learned lossy image compression methods are all trained at relatively low bit rates ($\le$ 2 bpp $\approx$ 0.67 bpsp on Kodak), we re-implement Ball\'{e}[MSE], Minnen[MSE] and Cheng[MSE] at high bit-rates ($\ge$ 0.8 bpsp on Kodak).
Though Cheng[MSE] outperforms Ball\'{e}[MSE] and Minnen[MSE] at low bit rates, it performs worse than Ball\'{e}[MSE] and Minnen[MSE] at high bit rates because the sophisticated analysis and synthesis transforms hinder it from reaching very high-quality reconstructions. In terms of the rate-distortion performance measured by $\ell_\infty$ error, our DLPR coding consistently yields the best results among all codecs. Besides $\ell_\infty$ error, we also compare the rate-distortion performance of all codecs measured by PSNR. Our DLPR coding can achieve competitive performance at bit rates higher than 0.8 bpsp, even though PSNRs of near-lossless reconstructions are not our optimization objective.

In Fig.\;\ref{fig:subjective1} and \ref{fig:subjective2}, we display the near-lossless reconstructed images resulting from our DLPR coding at variable $\tau$'s on Kodak and Histo24 datasets. When $\tau$ is not large, human eyes can hardly differentiate between the raw images and the near-lossless reconstructions.

\subsection{Runtime of DLPR coding}
We evaluate the runtime of our DLPR coding on images of three different sizes. We compare with four representative traditional lossless image codecs including JPEG-LS \cite{weinberger2000loco}, BPG \cite{bpg}, FLIF \cite{sneyers2016flif} and JPEG-XL \cite{alakuijala2019jpeg}. We also compare with the practical learned lossless image compression method L3C \cite{Mentzer2019cvpr} and the learned lossy image compression method Minnen[MSE] \cite{minnen2018nips} with a serial autoregressive model. Note that the reported runtime includes both inference time and entropy coding time.

As reported in Table\;\ref{tb:runtime}, our lossless DLPR coding is almost as fast as FLIF with respect to encoding speed, much faster than BPG, JPEG-XL, L3C and lossy Minnen[MSE]. Although our lossless DLPR coding is slower than traditional codecs with respect to decoding speed, it is still practical for 2K resolution images and much faster than the learned L3C and lossy Minnen[MSE]. When $\tau>0$, the near-lossless DLPR coding can be even faster because the entropy coding time is reduced by the adaptive residual interval scheme.
Based on the above results, we demonstrates the practicability of the DLPR coding system and its great potential to be employed in real lossless and near-lossless image compression tasks.

\begin{table}[!t]
\caption{Runtime (sec.) of DLPR coding on images of different sizes (encoding/decoding time). For lossless compression ($\tau=0$), we use $\lambda=0$. For near-lossless compression ($\tau>0$), we use $\lambda=0.03$. \textit{*OOM denotes out-of-memory.}}
\label{tb:runtime}
\centering
\small
\begin{tabular}{lccc}
\toprule[1pt]
    Codec  & 768$\times$512  &  996$\times$756  & 2040$\times$1356  \tabularnewline
\midrule
     JPEG-LS \cite{weinberger2000loco}    &   0.12/0.12      &  0.23/0.22     & 0.83/0.80      \tabularnewline
     BPG   \cite{bpg}      &   2.38/0.13      &  4.46/0.27     & 16.52/0.98      \tabularnewline
     FLIF  \cite{sneyers2016flif}      &   0.90/0.16      &  1.84/0.35     & 7.50/1.35      \tabularnewline
     JPEG-XL \cite{alakuijala2019jpeg}    &   0.73/0.08      &  12.48/0.14    & 40.96/0.42      \tabularnewline
     L3C  \cite{Mentzer2019cvpr}       &   8.17/7.89      &  15.25/14.55   & OOM*      \tabularnewline
     Minnen[MSE] \cite{minnen2018nips} &   2.55/5.18      &  5.13/10.36    & 18.71/37.97      \tabularnewline
\midrule
     DLPR lossless    &   1.26/1.80   &  2.28/3.24     & 8.20/11.91  \tabularnewline
     DLPR ($\tau=1$)    &   0.79/1.24   &  0.98/1.86     & 2.09/5.56  \tabularnewline
     DLPR ($\tau=2$)    &   0.75/1.20   &  0.92/1.78     & 1.87/5.24  \tabularnewline
     DLPR ($\tau=4$)    &   0.73/1.18   &  0.89/1.74     & 1.68/5.03  \tabularnewline
\bottomrule[1pt]
\end{tabular}
\end{table}

\subsection{Ablation Study}
\begin{table*}[!t]
\caption{The relationships between different network architectures of LIC and lossless image compression performance (bpsp). \textit{Conv.} denotes $5\times 5$ convolutional layers used in \cite{Balle2018variational}. \textit{A/S Blks.} denotes the proposed analysis and synthesis blocks. \textit{Attn.} denotes the attention block used in \cite{cheng2020cvpr}. \textit{Swin Attn.} denotes the proposed Swin attention blocks.}
\label{tb:lossy_blks}
\centering
\small
\begin{tabular}{cc|cc|ccc}
\toprule[1pt]
    Conv. \cite{Balle2018variational} & A/S Blks.  &   Attn. \cite{cheng2020cvpr} & Swin-Attn. & ImageNet64 & DIV2K & CLIC.m \tabularnewline
\midrule
     $\checkmark$ &  $\times$       &  $\times$     & $\times$     &  3.75 (+0.06) & 2.60 (+0.05) & 2.20 (+0.04) \tabularnewline
     $\times$     &  $\checkmark$   &  $\times$     & $\times$     &  3.71 (+0.02) & 2.57 (+0.02) & 2.18 (+0.02) \tabularnewline
     $\times$     &  $\checkmark$   &  $\checkmark$ & $\times$     &  3.71 (+0.02) & 2.56 (+0.01) & 2.17 (+0.01) \tabularnewline
\midrule
     $\times$     &  $\checkmark$   &  $\times$     & $\checkmark$ &  3.69 & 2.55 & 2.16 \tabularnewline
\bottomrule[1pt]
\end{tabular}

\end{table*}

\begin{table}[!t]
\caption{The relationships between different network architectures of RC and lossless image compression performance (bpsp).}
\label{tb:u_and_cr}
\centering
\small
\begin{tabular}{cc|ccc}
\toprule[1pt]
    $\u$  & $C_\r$ & ImageNet64 & DIV2K & CLIC.m \tabularnewline
\midrule
     $\checkmark$ &  $\times$            &  4.03 (+0.34) & 2.91 (+0.36) & 2.47 (+0.31) \tabularnewline
     $\times$     &  $\checkmark$        &  3.78 (+0.09) & 2.64 (+0.09) & 2.24 (+0.08) \tabularnewline
\midrule
     $\checkmark$ &  $\checkmark$        &  3.69 & 2.55 & 2.16 \tabularnewline
\bottomrule[1pt]
\end{tabular}

\end{table}

\noindent\textbf{Network architectures of LIC and RC.}
In Table\;\ref{tb:lossy_blks}, we study the relationships between different network architectures of LIC and lossless compression performance. Compared with the $5\times 5$ convolutional layers used in \cite{Balle2018variational} and the attention blocks used in \cite{cheng2020cvpr}, the proposed analysis/synthesis blocks and Swin attention blocks can effectively improve the lossless image compression performance. These results also demonstrate that LIC plays an important role in our DLPR coding system.

In Table\;\ref{tb:u_and_cr}, we study the relationships between different network architectures of RC and lossless image compression performance. Both the feature $\u$ and context $C_\r$ can effectively improve the lossless image compression performance, which demonstrates the effectiveness of the proposed network architecture of RC.
In Table.\;\ref{tb:llm_gaussian}, we further compare the lossless and near-lossless image compression performance of the logistic mixture model, Gaussian single model and Gaussian mixture model for RC and SQRC.
The logistic mixture model and Gaussian mixture model achieve almost identical performance, outperforming the Gaussian single model for complex real distributions of $\r$ and $\hr$.
We utilize the logistic mixture model because its cumulative distribution function (CDF) is a sigmoid function, making it easier to compute the probability of each discrete residual using \eqref{eq:r_eval}.
In contrast, the CDF of the Gaussian distribution is the more complex Gauss error function.

\begin{table}[htb]
\caption{Lossless and near-lossless image compression performance (bpsp) resulting from logistic mixture model (lmm.) with $K=5$, Gaussian single model (gsm.) and Gaussian mixture model (gmm.) with $K=5$.}
\label{tb:llm_gaussian}
\centering
\small
\begin{tabular}{l|cccc}
\toprule[1pt]
    model & ImageNet64 & DIV2K & CLIC.m \tabularnewline
\midrule
      lmm.(lossless)        &  3.69  & 2.55  & 2.16  \tabularnewline
      gsm.\;(lossless)      &  3.78 (+0.09) & 2.57 (+0.02) & 2.22 (+0.06)\tabularnewline
      gmm.(lossless)        &  3.70 (+0.01) & 2.55 (=) & 2.17 (+0.01)\tabularnewline
\midrule
      lmm.($\tau=1$)         &  2.59 & 1.69 & 1.50 \tabularnewline
      gsm.\;($\tau=1$)      &  2.61 (+0.02) & 1.72 (+0.03) & 1.52 (+0.02)\tabularnewline
      gmm.($\tau=1$)       &  2.59 (=) & 1.68 ($-$0.01) & 1.50 (=)\tabularnewline
\bottomrule[1pt]
\end{tabular}

\end{table}

\noindent\textbf{$\tau$-specific vs. scalable.}
As aforementioned in Sec.\;\ref{subsubsec:sqrc}, $\tau$-specific near-lossless image compression scheme leads to the challenging relaxation problem of residual quantization. In order to compare with our scalable near-lossless image compression scheme, we realize the $\tau$-specific models by relaxing residual quantization with straight-through (copying gradients from quantized $\hr$ to original $\r$). As shown in Fig.\;\ref{fig:tau_specific}, the resulting $\tau$-specific models perform worse than our scalable model in most cases, due to the gradient bias during training.

\begin{figure}[!t]
\centering
\includegraphics[width=.9\linewidth]{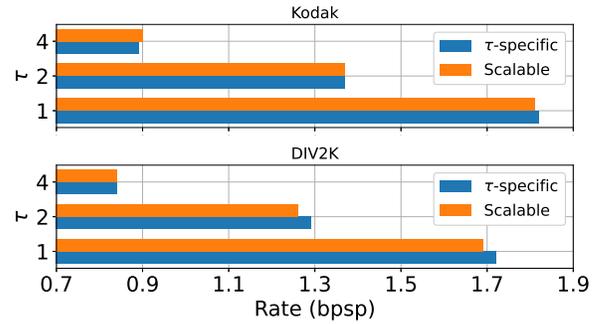}
\caption{Comparisons between $\tau$-specific scheme and the proposed scalable near-lossless compression scheme.}
\label{fig:tau_specific}
\end{figure}

In Fig.\;\ref{fig:tau_scalable}, we show lossless to near-lossless image compression performance $\tau\in\{0, 1, . . . , 5\}$ of each images compressed by our scalable model with $\lambda=0.03$ on Kodak dataset. $R_{\hy,\hz}$, on average, accounts for about 16\% of $R_{\hy,\hz}+R_\r$ at $\tau=0$. With the increase of $\tau$, the bit-rate $R_{\hr}^\tau$ of the quantized residual $\hr$ is significantly reduced.
Especially the $\tau=1$ near-lossless mode saves about 39\% bit rates compared with the $\tau=0$ lossless mode.

\begin{figure}[!t]
\centering
\includegraphics[width=.9\linewidth]{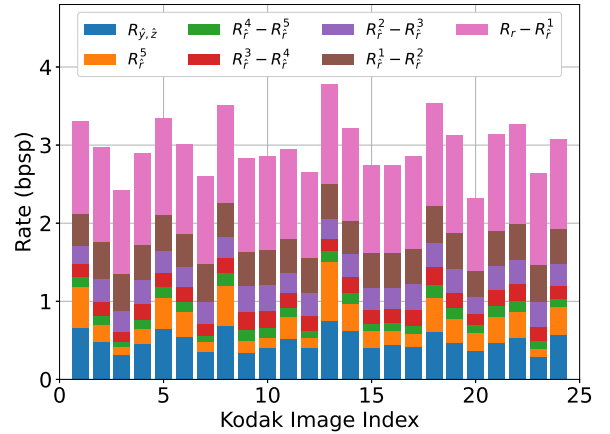}
\caption{Lossless to near-lossless image compression performance $\tau\in\{0, 1, . . . , 5\}$ of each image compressed by the DLPR coding system with $\lambda=0.03$ on Kodak dataset.}
\label{fig:tau_scalable}
\end{figure}

\noindent\textbf{SQRC for bias correction.}
In Fig.\;\ref{fig:bias_correction}, we demonstrate the efficacy of SQRC for bias correction. Because of the discrepancy between the oracle $\hp_\btheta(\hr|\u,C_\r)$ and the biased $\hp_\btheta(\hr|\u,C_\hr)$, encoding $\hr$ with $\hp_\btheta(\hr|\u,C_\hr)$ (without SQRC) degrades the compression performance. Instead, we encode $\hr$ with $\hp_\bvarphi(\hr|\u, C_{\hr},\tau)$ (with SQRC) resulting in lower bit rates.
With the increase of $\tau$, the PMF of $\r$ is quantized by larger bins and becomes coarser. Thus, the compression performance with SQRC approaches the oracle. However, the gap between the compression performance without SQRC and the oracle remains large, as the
discrepancy between $\hp_\btheta(\hr|\u,C_\r)$ and $\hp_\btheta(\hr|\u,C_\hr)$ is also magnified with the increasing $\tau$. Note that the compression performance of our DLPR coding without SQRC is still better than near-lossless WebP, JPEG-LS and CALIC.

\begin{figure}[!t]
\centering
\includegraphics[width=.9\linewidth]{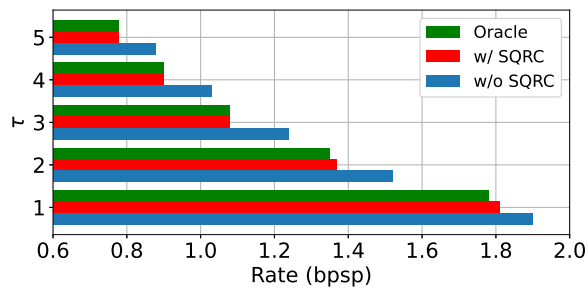}
\caption{Near-lossless image compression performance of scalable DLPR coding system with oracle $\hat{p}_\btheta(\hr|\u, C_\r)$, with SQRC and without SQRC.}
\label{fig:bias_correction}
\end{figure}

\noindent\textbf{Discussion on different $\lambda$'s.}
The $\lambda$'s in \eqref{eq:loss_func} adapts the lossless compression rate of the raw image $\x$ and the distortion of the lossy reconstruction $\tx$. In Table\;\ref{tb:lambda_ablation}, we evaluate the effects of different $\lambda$'s $\in\{0, 0.001, 0.03, 0.06\}$ on lossless image compression performance on Kodak dataset. With the increase of $\lambda$, both the rate $R_{\hy,\hz}$ and the PSNR of $\tx$ become higher. While the rate $R_\r$ of residual decreases at the same time, the decrease of $R_\r$ is smaller than the increase of $R_{\hy,\hz}$, leading to the degradation of lossless image compression performance. $\lambda=0$ leads to the best lossless compression performance. We visualize lossy reconstruction $\tx$'s, residual $\r$'s and feature $\u$'s resulting from different $\lambda$'s in Fig.\;\ref{fig:visualization_lambda}. Interestingly, DLPR coding with $\lambda=0$ learns to set $\tx=\mathbf{0}$ and $\r=\x$. The LIC becomes a special feature compressor extracting feature $\u$ for lossless image compression (proved effective in Table\;\ref{tb:u_and_cr}). This special case of DLPR coding with $\lambda=0$ is similar to PixelVAE \cite{gulrajani2016pixelvae}.

In Fig.\;\ref{fig:lambda_ablation}, we further study the effects of different $\lambda$'s on the near-lossless image compression performance.
Though $\lambda=0$ leads to the best lossless compression performance, it is unsuitable for near-lossless compression since the residual quantization \eqref{eq:r_quantization} is adopted on the $\r=\x$. For near-lossless compression, we set $\lambda=0.03$. Compared with $\lambda=0$ and $0.001$, the quantized residual $\hr$ of $\lambda=0.03$ results in much lower entropy and smaller context bias, since most elements of $\r$ and $\hr$ are zeros or close to zeros.
The reduction of $R_\hr^\tau$ of $\lambda=0.03$ compensates for larger $R_{\hy,\hz}$ with the increase of $\tau$.
Compared with $\lambda=0.06$, $\lambda=0.03$ enjoys similar $R_\hr^\tau$ but lower $R_{\hy,\hz}$.
Therefore, $\lambda=0.03$ achieves the most robust near-lossless image compression performance only slightly worse than $\lambda=0$ and $0.001$ at $\tau=1$.

\begin{table}[!t]
\caption{The effects of different $\lambda$'s on the lossless image compression performance on Kodak dataset.}
\label{tb:lambda_ablation}
\centering
\small
\begin{tabular}{lcC{3em}C{3em}c}
\toprule[1pt]
     $\lambda$ & Total Rate & $R_{\hy,\hz}$ & $R_{\r}$ & $\tx$ (PSNR) \tabularnewline
\midrule
     0        &  2.86     &  0.04      &  2.82   &  6.78  \tabularnewline
     0.001    &  2.94     &  0.13      &  2.81   &  29.80  \tabularnewline
     0.03     &  2.99     &  0.49      &  2.50   &  38.77  \tabularnewline
     0.06     &  3.02     &  0.59      &  2.43   &  39.74  \tabularnewline
\bottomrule[1pt]
\end{tabular}

\end{table}

\begin{figure}[!t]
\centering
\includegraphics[width=.9\linewidth]{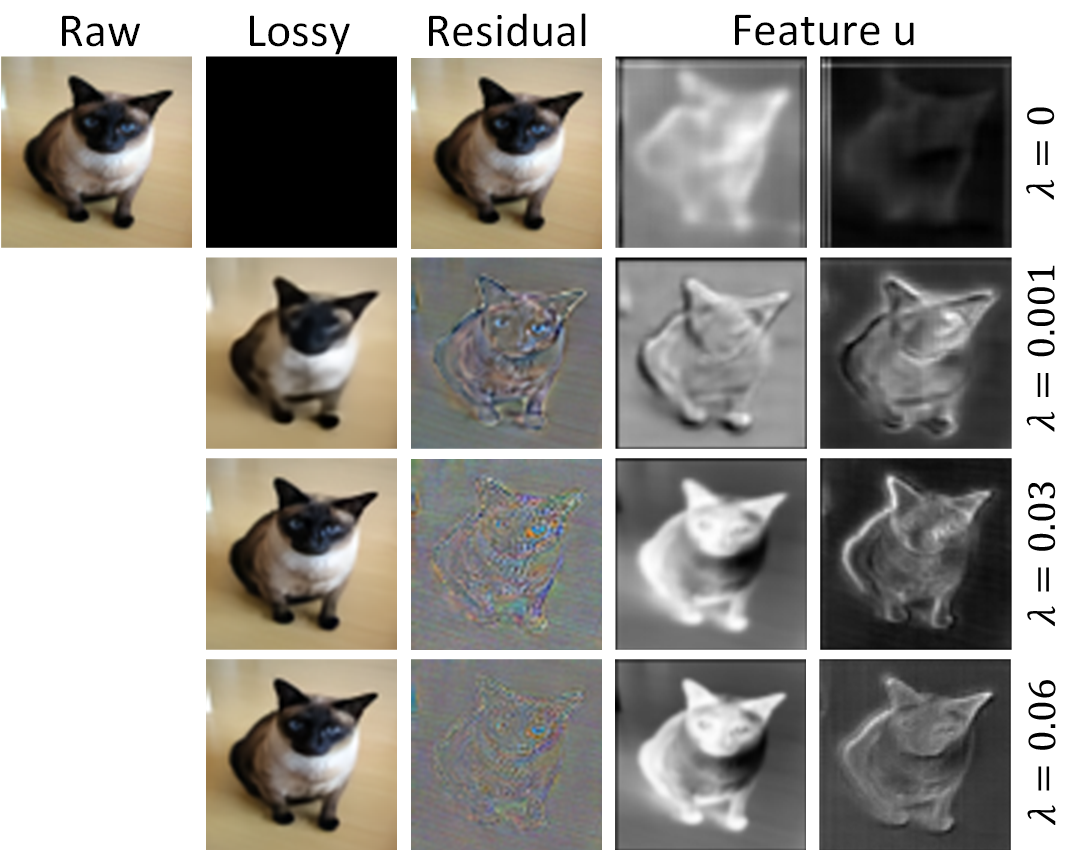}
\caption{Visual examples of lossy reconstructions, residuals and feature $\u$'s with different $\lambda$'s. When $\lambda=0$, the LIC becomes a special feature compressor extracting feature $\u$ for lossless image compression. The DLPR coding with $\lambda=0$ is similar to PixelVAE \cite{gulrajani2016pixelvae}.}
\label{fig:visualization_lambda}
\end{figure}

\begin{figure}[!t]
\centering
\includegraphics[width=.9\linewidth]{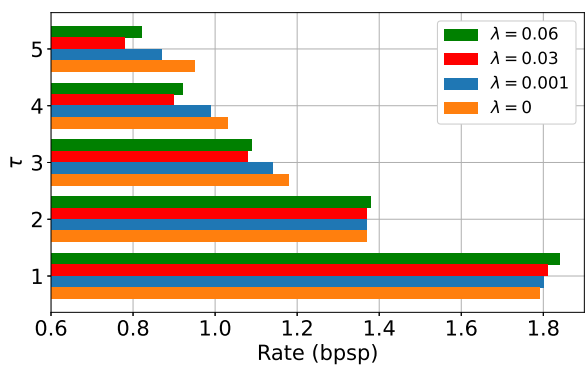}
\caption{The effects of different $\lambda$'s on the near-lossless image compression performance on Kodak dataset.}
\label{fig:lambda_ablation}
\end{figure}

\begin{figure}[htb]
\centering
\includegraphics[width=.9\linewidth]{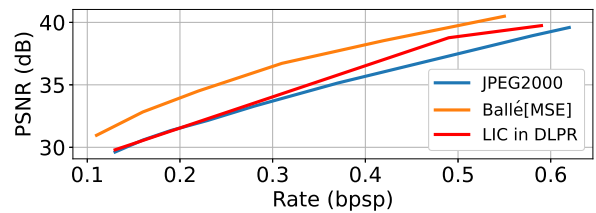}
\caption{Rate-distortion performance of LIC in DLPR coding at different $\lambda$'s on Kodak dataset.}
\label{fig:lic_comparison}
\end{figure}

\begin{table*}[!t]
\caption{Lossless image compression performance (bpsp) of different context models on ImageNet64, DIV2K, CLIC.p, CLIC.m, Kodak and Histo24 datasets. The context model $M_7^5$ is set as the anchor.}
\label{tb:context_design}
\centering
\small
\begin{tabular}{lC{5.5em}C{5.5em}C{5.5em}C{5.5em}C{5.5em}C{5.5em}}
\toprule[1pt]
  Context ($7\times 7$)    &  ImageNet64 & DIV2K & CLIC.p & CLIC.m & Kodak & Histo24 \tabularnewline
\midrule
    $M_7^5$ (Fig.\;\ref{fig:coding_seq} \& \ref{fig:coding_acc1})  & 3.69  & 2.55 & 2.37 & 2.15 & 2.86 & 2.96    \tabularnewline
    $M_7^4$ (Fig.\;\ref{fig:coding_acc2})                          & 3.70 (+0.01)  & 2.56 (+0.01) & 2.39 (+0.02) & 2.16 (+0.01) & 2.87 (+0.01) & 2.97 (+0.01)   \tabularnewline
\midrule
    $M_7^3$ (Fig.\;\ref{fig:coding_acc3}, Selected)                   & 3.69 (=)  & 2.55 (=) & 2.38 (+0.01) & 2.16 (+0.01) & 2.86 (=) & 2.96 (=) \tabularnewline
\midrule
    $M_7^2$ (Fig.\;\ref{fig:coding_acc4})                       & 3.71 (+0.02) & 2.59 (+0.04) & 2.40 (+0.03) & 2.19 (+0.04) & 2.95 (+0.09) & 3.01 (+0.05)   \tabularnewline
    $M_7^1$ (Fig.\;\ref{fig:coding_acc5})                     & 3.88 (+0.19) & 2.74 (+0.19) & 2.56 (+0.19) & 2.34 (+0.19) & 2.95 (+0.09) & 3.24 (+0.28)   \tabularnewline
\midrule
    checkerboard                  & 3.86 (+0.17) & 2.73 (+0.18) & 2.54 (+0.17) & 2.33 (+0.18) & 2.95 (+0.09) & 3.19 (+0.23)  \tabularnewline
    channel-only                  & 4.03 (+0.34) & 2.91 (+0.36) & 2.70 (+0.33) & 2.47 (+0.32) & 3.09 (+0.23) & 3.48 (+0.52)  \tabularnewline
    w/o context                   & 4.47 (+0.78) & 3.37 (+0.82) & 3.19 (+0.82) & 3.14 (+0.99) & 3.61 (+0.75) & 3.45 (+0.49) \tabularnewline
\bottomrule[1pt]
\end{tabular}

\end{table*}

\begin{table}[!t]
\caption{Runtime (sec.) of lossless DLPR coding with different context models and adaptive residual interval (AdaRI.) on Kodak dataset.}
\label{tb:runtime_ablation}
\centering
\small
\begin{tabular}{lccc}
\toprule[1pt]
    Context (7$\times$7)  &  AdaRI.       & Enc./Dec. Time \tabularnewline
\midrule
   $M_7^5$ (Fig.\;\ref{fig:coding_seq}, Serial)      & $\times$          & 12.46/23.24   \tabularnewline
   $M_7^5$ (Fig.\;\ref{fig:coding_seq}, Serial)      & $\checkmark$      & 11.93 (-0.53)/22.74 (-0.50)  \tabularnewline
   $M_7^5$ (Fig.\;\ref{fig:coding_acc1})      & $\checkmark$  & 1.47 (-10.99)/2.30 (-20.94)   \tabularnewline
\midrule
   $M_7^3$ (Fig.\;\ref{fig:coding_acc3}, Selected)     & $\checkmark$  & 1.24 (-11.22)/1.75 (-21.49)   \tabularnewline
\bottomrule[1pt]
\end{tabular}

\end{table}

\noindent\textbf{Rate-distortion performance of LIC.} We conduct an ablation study to show the rate-distortion performance of our LIC in DLPR coding framework in Fig.\;\ref{fig:lic_comparison}.
Besides lossy reconstruction $\tx$, our LIC also generates feature $\u$ of residual $\r$ and is jointly trained with RC. Therefore, the rate $R_{\hy,\hz}$ of our LIC not only carries the information from lossy reconstruction $\tilde{\x}$ but also carries part of the information from residual $\r$.
As a result, the rate-distortion performance of our LIC in DLPR coding framework is between Ball\'e[MSE] \cite{Balle2018variational} and JPEG2000 on Kodak dataset.

\noindent\textbf{Context design and adaptive residual interval.}
In Table\;\ref{tb:context_design}, we study the lossless image compression performance resulting from the designed $7\times 7$ context models in Fig.\;\ref{fig:coding_acceleration}. The context model $M_7^5$ is set as the anchor. Based on the experimental results, the selected context model $M_7^3$ removing two upper-right context pixels from $M_7^5$ achieves almost the same lossless image compression performance as $M_7^5$, even slightly better than $M_7^4$. At the same time, $M_7^3$ reduces about 40\% parallel coding steps from $5P-4$ to $3P-2$, compared with $M_7^5$. Though the context model $M_7^2$ (zigzag) and $M_7^1$ can be faster, the corresponding lossless compression performance drops significantly.
Besides, we also show the compression performance resulting from checkerboard context model \cite{he2021cvpr}. Without the help of transform coding, the compression performance of checkerboard context model is only slightly better than context $M_7^1$ and worse than context $M_7^2$ in our DLPR coding framework. We further switch off spatial context models and show the compression performance resulting from only channel-wise autoregressive scheme in \eqref{eq:r_chain_rule} and \eqref{eq:channel_ar}. The channel-only setting cannot reduce spatial redundancies among residuals, resulting in worse compression performance.
We also evaluate the compression performance without both spatial and channel context models.
Compared with the channel-only setting, the w/o context setting ignoring channel redundancies results in the worst compression performance, except on Histo24.
This is because the color properties of stained histological images differ from natural images, and the channel autoregressive model trained on DIV2K does not generalize as well on Histo24 as on other datasets.
In Table\;\ref{tb:runtime_ablation}, we finally shows that the designed context models and the adaptive residual interval scheme can effectively reduce the runtime of DLPR coding in practice.

\section{Conclusion}
\label{sec:conclusion}
In this paper, we propose a unified DLPR coding framework for both lossless and near-lossless image compression. The DLPR coding framework consists of a lossy image compressor, a residual compressor and a scalable quantized residual compressor, which is formulated in terms of VAEs and is solved with end-to-end training. The DLPR coding framework supports scalable near-lossless image compression with variable $\ell_\infty$-constraint $\tau$'s in a single network, instead of multiple networks for different $\tau$'s.
We further propose a novel design of context coding and an adaptive residual interval scheme to significantly accelerate the coding process.
Extensive experiments demonstrate that the DLPR coding system achieves not only the state-of-the-art compression performance, but also competitive coding speed for practical full resolution image compression tasks.


%



%
%

\ifCLASSOPTIONcaptionsoff
  \newpage
\fi



\bibliographystyle{IEEEtran}
\bibliography{nll_jrnl}
%
%
%

%


\begin{IEEEbiography}[{\includegraphics[width=1in,height=1.25in,clip,keepaspectratio]{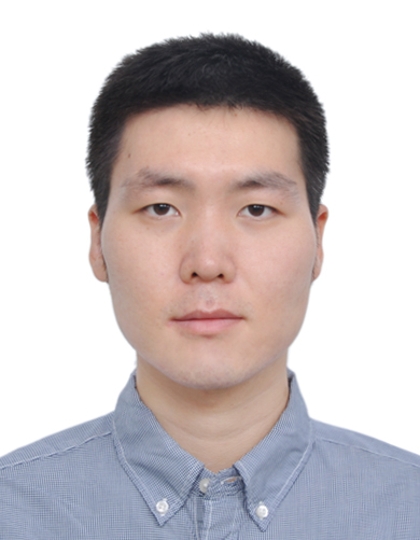}}]{Yuanchao Bai} (Member, IEEE)
received the B.S. degree in software engineering from Dalian University of Technology, Liaoning, China, in 2013. He received the Ph.D. degree in computer science from Peking University, Beijing, China, in 2020. He was a postdoctoral fellow in Peng Cheng Laboratory, Shenzhen, China, from 2020 to 2022. He is currently an assistant professor  with the School of Computer Science and Technology, Harbin Institute of Technology, Harbin, China.
His research interests include image/video compression and processing, deep unsupervised learning, and graph signal processing.
\end{IEEEbiography}

\begin{IEEEbiography}[{\includegraphics[width=1in,height=1.25in,clip,keepaspectratio]{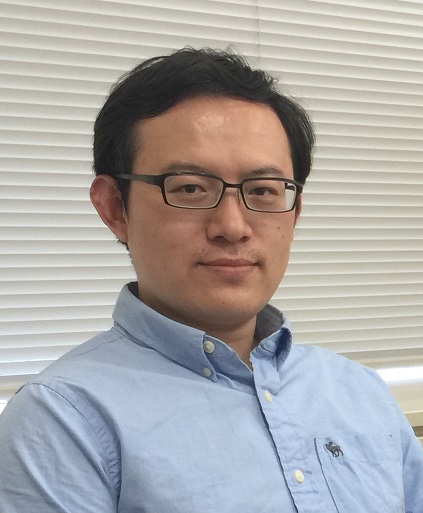}}]{Xianming Liu} (Member, IEEE)
 is a Professor with the School of Computer Science and Technology, Harbin Institute of Technology (HIT), Harbin, China. He received the B.S., M.S., and Ph.D degrees in computer science from HIT, in 2006, 2008 and 2012, respectively. In 2011, he spent half a year at the Department of Electrical and Computer Engineering, McMaster University, Canada, as a visiting student, where he then worked as a post-doctoral fellow from December 2012 to December 2013. He worked as a project researcher at National Institute of Informatics (NII), Tokyo, Japan, from 2014 to 2017. He has published over 60 international conference and journal publications, including top IEEE journals, such as T-IP, T-CSVT, T-IFS, T-MM, T-GRS; and top conferences, such as CVPR, IJCAI and DCC. He is the receipt of IEEE ICME 2016 Best Student Paper Award.
\end{IEEEbiography}

\begin{IEEEbiography}[{\includegraphics[width=1in,height=1.25in,clip,keepaspectratio]{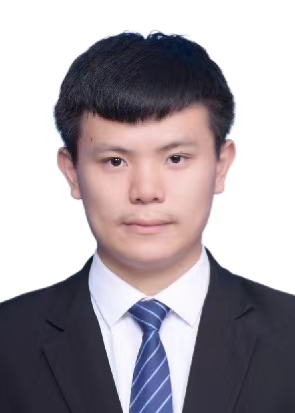}}]{Kai Wang}
received the B.S. degree in software engineering from Harbin Engineering University, Harbin, China, in 2020 and received the M.S. degree of electronic information in software engineering from Harbin Institute of Technology, Harbin, China, in 2022. He is currently pursuing the docter degree in electronic information in Harbin Institute of Technology, Harbin, China.
His research interests include image/video compression and deep learning.
\end{IEEEbiography}

\begin{IEEEbiography}[{\includegraphics[width=1in,height=1.25in,clip,keepaspectratio]{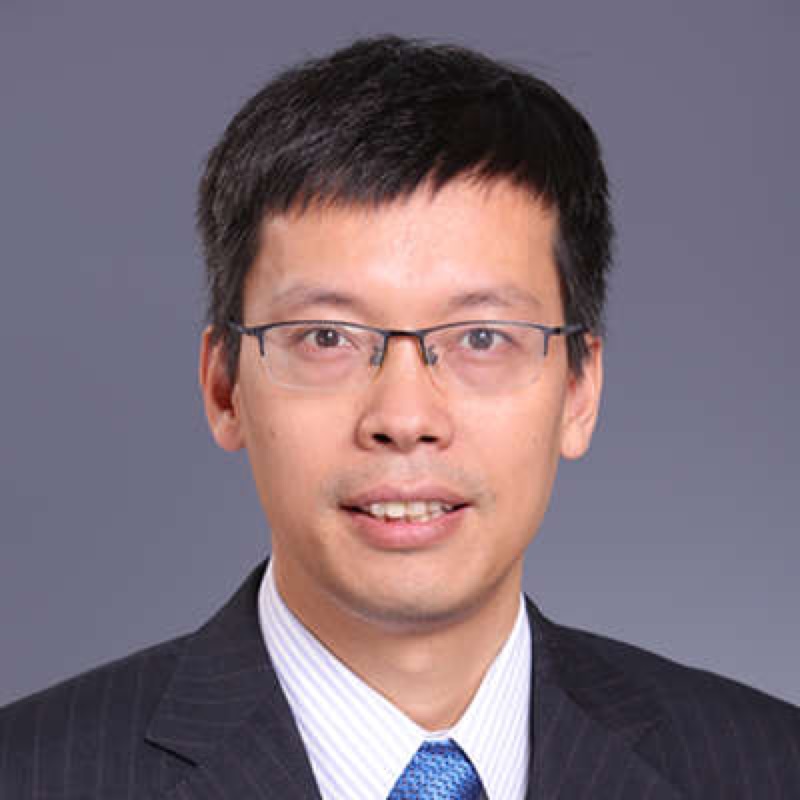}}]{Xiangyang Ji} (Member, IEEE) received the B.S. degree in materials science and the M.S. degree in computer science from the Harbin Institute of Technology, Harbin, China, in 1999 and 2001, respectively, and the Ph.D. degree in computer science from the Institute of Computing Technology, Chinese Academy of Sciences, Beijing, China. He joined Tsinghua University, Beijing, in 2008, where he is currently a Professor with the Department of Automation, School of Information Science and Technology. He has authored over 100 referred conference and journal papers. His current research interests include signal processing, image/video compressing, and intelligent imaging.
\end{IEEEbiography}

\begin{IEEEbiography}[{\includegraphics[width=1in,height=1.25in,clip,keepaspectratio]{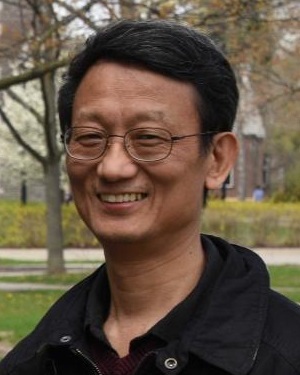}}]{Xiaolin Wu} (Fellow, IEEE) received the B.Sc. degree in computer science from Wuhan University, China, in 1982, and the Ph.D. degree in computer science from the University of Calgary, Canada, in 1988. He started his academic career in 1988. He was a Faculty Member with Western University, Canada, and New York Polytechnic University (NYU-Poly), USA. He is currently with McMaster University, Canada, where he is a Distinguished Engineering Professor and holds an NSERC Senior Industrial Research Chair. His research interests include image processing, data compression, digital multimedia, low-level vision, and network-aware visual communication. He has authored or coauthored more than 300 research articles and holds four patents in these fields. He served on technical committees of many IEEE international conferences/workshops on image processing, multimedia, data compression, and information theory. He was a past Associated Editor of IEEE TRANSACTIONS ON MULTIMEDIA. He is also an Associated Editor of IEEE TRANSACTIONS ON IMAGE PROCESSING.
\end{IEEEbiography}

\begin{IEEEbiography}[{\includegraphics[width=1in,height=1.25in,clip,keepaspectratio]{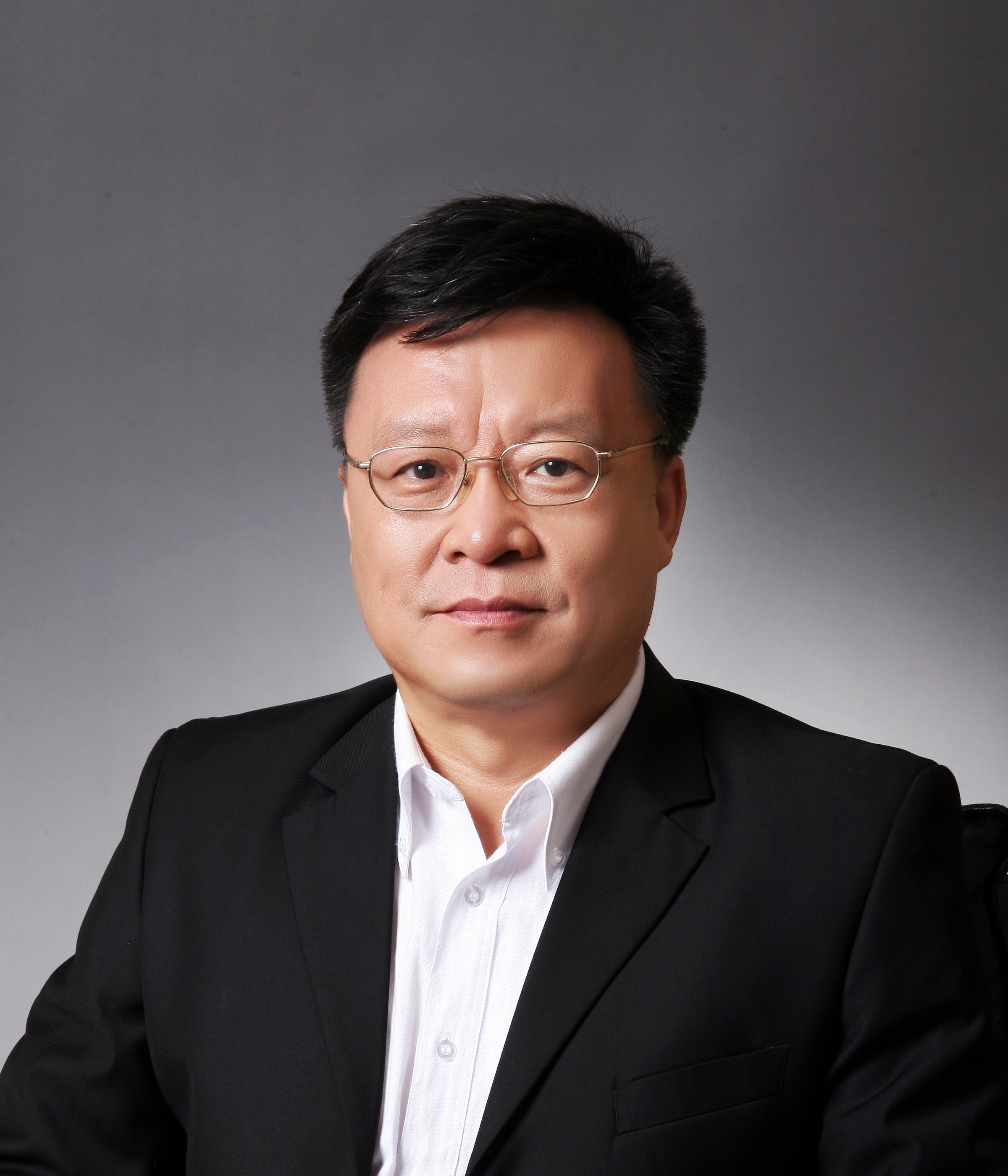}}]{Wen Gao} (Fellow, IEEE) received the Ph.D. degree in electronics engineering from The University of Tokyo, Japan, in 1991. He is currently a Boya Chair Professor in computer science at Peking University. He is the Director of Peng Cheng Laboratory, Shenzhen. Before joining Peking University, he was a Professor with Harbin Institute of Technology from 1991 to 1995. From 1996 to 2006, he was a Professor at the Institute of Computing Technology, Chinese Academy of Sciences. He has authored or coauthored five books and over 1000 technical articles in refereed journals and conference proceedings in the areas of image processing, video coding and communication, computer vision, multimedia retrieval, multimodal interface, and bioinformatics. He served on the editorial boards for several journals, such as ACM CSUR, IEEE TRANSACTIONS ON IMAGE PROCESSING (TIP), IEEE TRANSACTIONS ON CIRCUITS AND SYSTEMS FOR VIDEO TECHNOLOGY (TCSVT), and IEEE TRANSACTIONS ON MULTIMEDIA (TMM). He served on the advisory and technical committees for professional organizations. He was the Vice President of the National Natural Science Foundation (NSFC) of China from 2013 to 2018 and the President of China Computer Federation (CCF) from 2016 to 2020. He is the Deputy Director of China National Standardization Technical Committees. He is an Academician of the Chinese Academy of Engineering and a fellow of ACM. He chaired a number of international conferences, such as IEEE ICME 2007, ACM Multimedia 2009, and IEEE ISCAS 2013.
\end{IEEEbiography}







\end{document}